 \providecommand{\F}{\mathbb{F}}
\date{}
\newtheorem{lemma}{Lemma}[section]
\newtheorem{theorem}[lemma]{Theorem}
\newtheorem{cor}[lemma]{Corollary}
\newtheorem{defn}{Definition}
\newtheorem{rmk}{Remark}
\theoremstyle{remark}
\renewcommand{\epsilon}{\varepsilon}
\renewcommand{\le}{\leqslant}
\renewcommand{\ge}{\geqslant}
\renewcommand{\leq}{\leqslant}
\renewcommand{\geq}{\geqslant}
\newcommand{\vnote}[1]{}
\def\PP{\mathbb{P}}
\def \Xi {{X^{[i]}}}
\newcommand{\Ga}{\alpha}
\newcommand{\Gb}{\beta}
\def \bc {{\bf c}}
\def \bx {{\bf x}}
\def \bh {{\bf h}}
\def \bg {{\bf g}}
\def \bu {{\bf u}}
\def \bv {{\bf v}}
\def\Supp {{\rm Supp }}
\title{\textbf{How long can optimal locally repairable codes be?}}
\author{Venkatesan Guruswami\thanks{Computer Science Department, Carnegie Mellon University, Pittsburgh, USA. This
work was done when the author was visiting the School of Physical and Mathematical Sciences, Nanyang Technological University, Singapore, and the Center of Mathematical Sciences and Applications, Harvard University. Research supported in part by NSF CCF-1563742. Email: {\tt venkatg@cs.cmu.edu}} \and Chaoping Xing\thanks{School of Physical and Mathematical Sciences, Nanyang Technological University, Singapore. Email: {\tt xingcp@ntu.edu.sg}.}
 \and Chen Yuan\thanks{Centrum Wiskunde \& Informatica, Amsterdam, Netherlands. Most of this work was done when the author was with the School of Physical and Mathematical Science, Nanyang Technological University, Singapore. Research supported in part by ERC H2020 grant No.74079 (ALGSTRONGCRYPTO). Email: {\tt Chen.Yuan@cwi.nl}}}
\begin{document}

\maketitle
\thispagestyle{empty}

\begin{abstract}
A locally repairable code (LRC) with locality $r$ allows for the recovery of any erased codeword symbol using only $r$ other codeword symbols. A Singleton-type bound dictates the best possible trade-off between the dimension and distance of LRCs --- an LRC attaining this trade-off is deemed \emph{optimal}. Such optimal LRCs have been constructed over alphabets growing linearly in the block length. Unlike the classical Singleton bound, however, it was not known if such a linear growth in the alphabet size is necessary, or for that matter even if the alphabet needs to grow at all with the block length.  Indeed, for small code distances $3,4$, arbitrarily long optimal LRCs were known over fixed alphabets.

\smallskip
Here, we prove that for distances $d \ge 5$, the code length $n$ of an optimal LRC over an alphabet of size $q$ must be at most roughly $O(d q^3)$. For the case $d=5$, our upper bound is $O(q^2)$. We complement these bounds by showing the existence of optimal LRCs of length $\Omega_{d,r}(q^{1+1/\lfloor(d-3)/2\rfloor})$ when $d \le r+2$. These bounds match when $d=5$, thus pinning down $n=\Theta(q^2)$ as the asymptotically largest length of an optimal LRC for this case.
\end{abstract}

\section{Introduction}
Modern distributed storage systems have been transitioning to erasure coding based schemes with good storage efficiency in order to cope with the explosion in the amount of data stored online. Locally Repairable Codes (LRCs) have emerged as the codes of choice for many such scenarios and have been implemented in a number of large scale systems e.g., Microsoft Azure~\cite{HSX+12} and Hadoop~\cite{SAP+13}.

A block code is called a locally repairable code (LRC) with locality $r$ if every symbol in the encoding is a function of
$r$ other symbols. This enables recovery of any single erased symbol in a local fashion by downloading at most $r$ other symbols. On the other hand, one would like the code to have a good minimum distance to enable recovery of many erasures in the worst-case. LRCs have been the subject of extensive study in recent years~\cite{HL07,GHSY12,PKLK12,SRKV13,JMX17,LMX17,FY14,PD14,TB14,TPD16,BTV17}.
LRCs offer a good balance between very efficient erasure recovery in the typical case in distributed storage systems where a single node fails (or becomes temporarily unavailable due to maintenace or other causes), and still allowing recovery of the data from a larger number of erasures and thus safeguarding the data in more worst-case scenarios.

A Singleton-type bound for locally repairable codes relating its
length $n$, dimension $k$, minimum distance $d$ and locality $r$ was first shown in the highly influential work~\cite{GHSY12}. It states that a linear locally repairable code $C$ must obey\footnote{The  bound in \cite{GHSY12} was shown even for a weaker requirement of locality only for the information symbols, but we focus on the more general all-symbol locality.}
\begin{equation}\label{eq:x1}
 d(C)\le n-k-\left\lceil \frac kr\right\rceil+2.
\end{equation}
%
Note that any linear code of dimension $k$ has locality at most $k$, so in the case when $r=k$ the above bound specializes to the classical Singleton bound $d \le n-k+1$, and in general it quantifies how much one must back off from this bound to accommodate locality.

A linear LRC that meets the bound \eqref{eq:x1} with equality is said to be an \emph{optimal} LRC. This work concerns the trade-off between alphabet size and code length for linear codes that are optimal LRCs.
Initially, the existence of such optimal LRCs and constructions were only known over fields that were exponentially large in the block length~\cite{HCL,SRKV13}.\footnote{If locality is desired only for the information symbols, then it is easy to construct optimal LRCs over linear-sized fields using any MDS code via the ``Pyramid" construction~\cite{HCL}. As we said, our focus is on LRCs with all-symbol locality which is more challenging to ensure.}
%
In a celebrated paper, Tamo and Barg \cite{TB14} constructed clever subcodes
of Reed-Solomon codes that yield a class of optimal locally repairable codes inheriting the field size $q \approx n$ of Reed-Solomon codes. This shows that one can have optimal LRCs with a field size similar to that of Maximum Distance Separable (MDS) codes which attain the classical Singleton bound $d=n-k+1$.

One is thus tempted to make an analogy between optimal LRCs and MDS codes. The famous MDS conjecture says that there are no non-trivial (meaning, distance $d > 2$) MDS codes with length exceeding $q+1$ where $q$ is its alphabet size, except in two corner cases  ($q$ even and $k=3$, or $k=q-1$) where the length is at most $q+2$. This conjecture was famously resolved in the case when $q$ is prime by Ball \cite{Ball}.

For optimal LRCs, it was shown that an analogous strong conjecture does not hold~\cite{LMX17} for almost every distance $d$  --- using elliptic curves, they gave LRCs length $q + 2\sqrt{q}$ (an earlier construction using rational function
fields achieved length $q+1$~\cite{JMX17}). A construction of length $n \approx \frac{r+1}{r} q$ was given for small distances in \cite{ZXL}. Note that all these constructions have length that is at most $O(q)$.

The MDS conjecture makes a very precise statement about the maximum possible length of MDS codes. An asymptotic upper bound of $n = O(q)$ (in fact even $n \le 2q$) is much easier to establish for MDS codes. Given this apparent parallel and the above-mentioned constructions which don't achieve code lengths exceeding $O(q)$, one might wonder if the Tamo-Barg result is asymptotically optimal, in the sense that optimal LRCs must have length at most $O(q)$. Rather surprisingly, it was not even known if $n$ must be bounded as a function of $q$ at all --- that is, it was conceivable that one could have arbitrarily long optimal LRCs over an alphabet of fixed size! Indeed, Barg et.al, \cite{BHHMV16} gave optimal LRCs using algebraic surfaces of length $n \approx q^2$ when the distance $d=3$ and $r \le 4$. This then inspired the discovery of optimal LRCs with \emph{unbounded length} for $d=3,4$ via cyclic codes~\cite{LXY}. We also include a simple construction of arbitrarily optimal LRCs for $d=3,4$ over any fixed field size that satisfies $q\ge r+1$.

\medskip\noindent \textbf{Our Results.}
Given this state of knowledge, the natural question that arises is whether there is any upper bound at all on the length of optimal locally repairable codes (as a function of its alphabet size).
In this paper, we answer this question affirmatively. In fact, we show that as soon as the distance $d \ge 5$, one cannot have unbounded length optimal LRCs (unlike the cases of $d =3,4$). Below is a statement of our upper bound on the code length of optimal LRCs. To the best of our knowledge, this is the first upper bound on the length of optimal LRCs.

\begin{theorem}[Upper bound on code length of LRCs]
\label{thm:intro-upperbound}
Let $d \ge 5$, and let $C$ be an optimal LRC with locality $r$ (that meets the bound \eqref{eq:x1} with equality) of length $n \ge \Omega(d r^2)$ over an alphabet of size $q$. Then $n \le O(d q^3)$ when $d$ is not divisible by $4$, and $n \le O(d q^{3+ 4/(d-4)})$ when $4 | d$.
\end{theorem}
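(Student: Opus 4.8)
The plan is to pass first to a rigid normal form that every sufficiently long optimal LRC must satisfy, then to convert the minimum-distance condition into a ``mutual general position'' statement about linear-algebraic data attached to the local repair groups, and finally to bound the number of groups by a projective-packing / subspace-design argument.

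\textbf{Step 1 (normal form).} I would first establish that, when $n=\Omega(dr^2)$, the optimal LRC $C$ has the following structure: the $n$ coordinates partition into disjoint local repair groups $G_1,\dots,G_m$ of size exactly $r+1$, each inducing a single-parity-check $[r+1,r,2]$ code; after rescaling coordinates (harmless because the groups are disjoint) every local parity check is the all-ones vector on its group; and, feeding the dimension and distance into \eqref{eq:x1}, one gets $m=n/(r+1)=\lceil k/r\rceil$ and that the number $\GG:=d-2$ of ``global'' parity checks --- the codimension of $C$ beyond the $m$ local ones --- does not grow with $n$ or $q$. Thus $C$ admits a parity-check matrix whose $\ell$-th column has the block form $(\mathbf{e}_{\iota(\ell)},\mathbf{v}_\ell)$, where $\iota(\ell)\in[m]$ is the group containing coordinate $\ell$ and $\mathbf{v}_\ell\in\F_q^{\GG}$ is a column of the $\GG\times n$ global block.

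\textbf{Step 2 (geometry).} Since any $d-1$ columns of a parity-check matrix of $C$ are linearly independent, two things follow. (i) Columns confined to one group $G_i$ are independent precisely when the points $\{(1,\mathbf{v}_\ell):\ell\in G_i\}$ are in general position in $\PP(\F_q^{\GG+1})$, i.e.\ each group is an \emph{arc} of size $r+1$ in $\PP(\F_q^{\GG+1})$; this already forces $r+1=O(q+d)$. (ii) For columns drawn from two or more groups, independence says that the secant flats of distinct arcs are in general position; after intersecting with the hyperplane at infinity, the ``direction flats'' spanned by $\{\mathbf{v}_\ell-\mathbf{v}_{\ell'}\}$ inside distinct groups are mutually in general position in $\PP(\F_q^{\GG})$. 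The simplest case of (ii) already shows that the pairwise directions $\mathbf{v}_\ell-\mathbf{v}_{\ell'}$ of $G_i$, as points of $\PP(\F_q^{\GG})$, are disjoint from those of $G_j$; and a fixed-basepoint star in each group contributes at least $r$ distinct directions, so $mr\le|\PP(\F_q^{\GG})|$.

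\textbf{Step 3 (counting).} For $d=5$ this finishes: $\GG=3$, so $mr\le|\PP(\F_q^3)|=q^2+q+1$, hence $n=m(r+1)\le 2mr=O(q^2)$, matching the construction. For larger $d$ the bare disjointness only yields $n=O(q^{d-3})$; the improvement comes from exploiting the \emph{full} strength of $(d-1)$-wise independence, which makes the per-group direction families not merely disjoint but jointly in general position --- a subspace design whose strength grows with $d$ --- and this should let the count be reorganized so that it takes place, up to a factor $O(d)$, inside a bounded-dimensional projective geometry, yielding $n=O(d\,q^3)$. When $4\mid d$ the parameter $d-1$ is odd, so the balanced split of the $d-1$ available columns between two groups is forced to be lopsided, which weakens the design parameters; making this loss quantitative is what produces the weaker exponent $n=O(d\,q^{3+4/(d-4)})$.

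The technical heart, which I expect to be the main obstacle, is Step 3: driving the count down from the naive $q^{d-3}$ (which only uses $d\ge5$) to $q^3$ by exploiting all of the $(d-1)$-wise independence, while correctly isolating the parity phenomenon that obstructs this when $4\mid d$. A secondary difficulty is Step 1: the normal form --- all groups of size exactly $r+1$, partitioning $[n]$, with exactly $d-2$ global checks --- genuinely needs $n=\Omega(dr^2)$, so the degenerate configurations violating it (short or overlapping groups, or hidden local redundancy not counted by $\lceil k/r\rceil$) must be ruled out first.
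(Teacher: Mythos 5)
Your Steps 1 and 2 track the paper's Lemma~\ref{lem:disjoint} and the setup in Theorem~\ref{thm:upperbound} fairly closely: disjoint groups under $n=\Omega(dr^2)$, the normal form with all-ones local checks, and (for $d=5$) the observation that the $rn/(r+1)$ basepoint differences $\bh_{i,j}-\bh_{i,r+1}$, shortened to $\F_q^h$ with $h\le d-2$, are pairwise non-proportional, giving $n=O(q^2)$. That much is sound.

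Step 3, however, is a genuine gap, and you flag it yourself. The ``subspace design / jointly in general position'' framing is not what is needed and would in fact be harder than the paper's argument. The missing observation is both simpler and stronger: for $t=\lfloor(d-1)/2\rfloor$, \emph{any $t$ of the difference vectors} $\bh'_{i,j}:=\bh_{i,j}-\bh_{i,r+1}$ \emph{are linearly independent}, not merely pairwise non-proportional --- a vanishing linear combination of them expands into a dependence among at most $2t\le d-1$ distinct columns of $H$, contradicting $d(C)\ge d$. Projecting away the (identically zero) first $n/(r+1)$ coordinates therefore yields a code $C_2$ of length $N=rn/(r+1)$, codimension at most $h\le d-2$, and minimum distance at least $t+1$, and the ordinary Hamming (sphere-packing) bound applied to $C_2$ --- with a one-coordinate puncture when $t+1$ is even --- already gives $N\lesssim d\,q^{4(d-2)/(d-a)-1}$ for $d\equiv a\pmod 4$. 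The $O(dq^3)$ versus $O(dq^{3+4/(d-4)})$ split then falls out of comparing $h\le d-2$ against the packing radius $\lfloor t/2\rfloor$ across the four residues $a$; it is not the ``lopsided split of $d-1$ columns between two groups'' heuristic you suggest. In short, the proposal is missing the key reduction (a derived code of distance $\approx d/2$ and codimension $\le d-2$ to which sphere-packing applies), not merely the final calculation, and the route you sketch toward it does not appear workable.
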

Our actual upper bound is a bit better when $d \equiv 1 \pmod{4}$ and in particular yields $n \le O(q^2)$ when $d=5$. The technical condition that $n$ is at least $\Omega(d r^2)$ arises in ensuring that the code consists of $n/(r+1)$ \emph{disjoint} recovery groups of size $(r+1)$ each, that together ensure recoverability with locality $r$ for every codeword symbols.
Meanwhile, we have to point out that our bound yields nothing when $d$ is proportional to $n$. For this setting, we show
another bound that $d \le O(qr)$, showing that $d$ cannot be too large for an LRC with small locality $r$ unless the alphabet size is large. This follows from a combination of the puncturing argument in \cite{CM13} and the Plotkin bound.

In our second result, we complement the above result on the limitation of LRCs with a construction of super-linear (in $q$) length for $d \le r+2$.
\begin{theorem}[Construction of long LRCs]
\label{thm:intro-construction}
For every $r,d$ with $d \le r+2$, there exist optimal LRCs of length $n \ge \Omega_{d,r}(q^{1 + 1/\lfloor(d-3)/2\rfloor})$.\footnote{When $d=r+2$, it turns out that one cannot achieve bound \eqref{eq:x1} with equality; so we get codes with $d = n-k - \lceil k/r\rceil +1$ which is the optimal trade-off in this case. For $d \le r+1$ we attain \eqref{eq:x1} with equality.}
\end{theorem}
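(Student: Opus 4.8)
The plan is to build the code as a carefully chosen subcode of a Reed--Solomon-like code over $\F_q$, in the spirit of Tamo--Barg, but with the evaluation points organized so that the number of disjoint repair groups can be pushed to $\Omega_{d,r}(q^{1+1/\lfloor(d-3)/2\rfloor})$ rather than $O(q)$. Set $m = \lfloor (d-3)/2\rfloor$ and $Q = q^m$; the ambient field will be $\F_Q$, and the key is that each repair group will "live" over a coset structure that only costs $q$ in the alphabet while giving us $\approx Q$ worth of independent groups. Concretely, I would take a good polynomial $g \in \F_q[x]$ of degree $r+1$ that is constant on the fibers of a partition of a subset $A \subseteq \F_Q$ into $|A|/(r+1)$ blocks of size $r+1$ — here one uses either the multiplicative/additive subgroup construction of Tamo--Barg (when $(r+1)\mid (q-1)$ or $(r+1)\mid q$) or the algebraic-function-field variant. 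The LRC is then $C = \{ (f(a))_{a\in A} : f \in V \}$ where $V$ is the space of polynomials of the form $f = \sum_i f_i(g) x^i$ with $\deg f_i$ controlled so that locality $r$ holds automatically (each group's symbols satisfy a degree-$r$ polynomial relation), and the global degree is bounded so that the distance is $n-k-\lceil k/r\rceil + 2$ (or $+1$ in the $d=r+2$ corner case).

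The first step is to fix the parameters: choose $k$, $r$, and the degree budget so that $d \le r+2$ and the bound \eqref{eq:x1} is met with equality (for $d \le r+1$) or with the $+1$ relaxation (for $d=r+2$). Since $d-2 = n-k-\lceil k/r\rceil$ and $d\le r+2$ forces $k \le r+1$ roughly, we want $k$ small — essentially $k = O(d)$ — so the "global" part of $f$ contributes only $O(d)$ degrees of freedom, and almost all the length is absorbed into the repair-group structure. The second step, which is where the super-linear length comes from, is to produce an evaluation set $A \subseteq \F_Q$ with $|A| = \Omega_{d,r}(Q^{1/m}\cdot Q) = \Omega_{d,r}(q^{1+1/m})$ wait — more carefully, we need $n = |A| = \Omega_{d,r}(q^{1+1/m})$ with the code still over $\F_q$, which means the evaluations $f(a)$ must actually land in $\F_q$ even though $a$ ranges over a larger structure; this is arranged by taking $f$ to be built from $\F_q$-linear traces, or equivalently by working with a code over $\F_q$ whose coordinates are indexed by points of an algebraic curve/variety of the right size over $\F_q$ with a degree-$(r+1)$ map to $\PP^1$ all of whose fibers are $\F_q$-rational. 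The Garcia--Stichtenoth-type or Hermitian-type curves give fiber-product constructions with $\approx q^{1+1/m}$ points lying over a line, which is exactly the source of the exponent.

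The third step is to verify the two code properties. Locality $r$: on each fiber (repair group) $B$ of size $r+1$, the restricted codewords form a subspace of $\F_q^B$ that is contained in an $[r+1, \le r]$ code (because $f$ restricted to $B$ agrees with a polynomial of degree $\le r$ in a local coordinate), so any single erased symbol in $B$ is recoverable from the other $r$. Minimum distance: a nonzero $f$ in the code has a bounded number of zeros on $A$ — one bounds the zero count by the global degree plus the contribution from fibers on which $f$ vanishes identically, and the degree budget was chosen precisely so that $n - (\#\text{zeros}) \ge d$; here one must be a little careful to count "group-zeros" correctly, since a group contributing all zeros costs $r+1$ coordinates but only $1$ toward the $\lceil k/r\rceil$ term, which is exactly what makes the LRC Singleton bound (rather than the classical one) the relevant one. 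Finally, the $d=r+2$ footnote: in this regime one checks that \eqref{eq:x1} cannot be met with equality by a parameter-counting obstruction (the group structure forces $k \le r$ whereas equality would need $k=r+1$-type parameters), so one settles for $d = n-k-\lceil k/r\rceil+1$, which the same construction achieves.

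I expect the main obstacle to be the second step: arranging an $\F_q$-coordinate code of length $\Omega_{d,r}(q^{1+1/m})$ that simultaneously (i) has the clean disjoint-$(r+1)$-group structure needed for locality and (ii) has a small enough "global" dimension that the distance stays at $d$. The tension is that known super-linear-length point sets come from curves of growing genus, and genus hurts the distance estimate (the Riemann--Roch gap eats into the degree budget); one has to check that with $d \le r+2$ fixed and $m = \lfloor(d-3)/2\rfloor$, the genus of the relevant curve (a degree-$m$-ish fiber product or a single curve with an automorphism group of order $r+1$) is $O_{d,r}(1)$ or at least negligible compared to the $q^{1+1/m}$ length, so that the distance loss is absorbable. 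Getting the exponent to be exactly $1 + 1/\lfloor(d-3)/2\rfloor$ — rather than something weaker — will require the curve to have the extremal ratio of rational points to the degree of the map to $\PP^1$, which is why Hermitian-type curves (where $\#\text{points} \approx q^{3/2}$ over $\F_q$ and there is a degree-$(q+1)$ map) and their generalizations are the natural candidates.
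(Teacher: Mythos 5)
Your proposal takes a genuinely different route from the paper --- you aim for an algebraic (Tamo--Barg / curve-based) construction, while the paper gives a \emph{greedy construction of the parity-check matrix} --- but the algebraic route as you sketch it has substantive gaps that you acknowledge but do not close, plus one parameter error that would derail the plan even if the curve geometry worked out.

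First, the parameter error. You write that $d-2 = n-k-\lceil k/r\rceil$ and $d\le r+2$ ``forces $k\le r+1$ roughly,'' so that the global part of $f$ has only $O(d)$ degrees of freedom. This is backwards: $n-k-\lceil k/r\rceil \le r$ gives $n \le k\bigl(1+\tfrac1r\bigr)+r$, i.e.\ $k \ge \frac{r(n-r)}{r+1}$, so $k$ is \emph{large} ($k=\Theta(n)$), and it is $n-k$ that stays small (on the order of $\frac{n}{r+1}+d-2$). Your evaluation-code plan envisions a low-dimensional space of functions $V$, which is the wrong regime; the code you want has dimension proportional to $n$. That mismatch alone means the construction would have to be re-thought.

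Second, the unresolved genus tension. You correctly flag that a source of $\Omega(q^{1+1/m})$ evaluation points with a clean degree-$(r+1)$ fibration is needed, and that the natural sources (Hermitian-type curves, fiber products) have genus growing with $q$, which eats the distance budget through the Riemann--Roch defect. You leave this as an ``I expect the main obstacle to be'' item rather than resolving it. In fact, with $d\le r+2$ fixed and $n=\Omega(q^{1+1/m})$, the required distance budget is $d-1 = O(1)$, while the genus of a Hermitian-type curve is $\Theta(q^2)$; there is no way to absorb that loss, so the curve approach needs a fundamentally different (genus-$O(1)$, superlinearly-many-points) object that does not exist over $\F_q$. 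This is a real dead end for the route as proposed, not a technicality.

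The paper sidesteps all of this. It constructs the parity-check matrix $H$ directly, column by column, in blocks of size $r+1$: the top $\frac{n}{r+1}$ rows encode the disjoint repair groups (a block-diagonal all-ones structure), and the bottom $h\le d-2$ rows are filled greedily so that every newly added column is linearly independent of \emph{every} subset of at most $d-2$ previously chosen columns. The heart of the argument is a union bound over the choices of those $d-2$ columns: if the $d-2$ columns touch $t$ blocks, there are at most $(t(r+1))^{d-2}\bigl(\frac{n}{r+1}\bigr)^{t-1}$ such subsets and each spans at most $q^{d-1-t}$ bad candidates for the new column (the block structure forces $t\ge 2$ on each alien block, so $t\le \lfloor\frac{d-1}{2}\rfloor$). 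Summing and comparing against the $\approx q^{d-1}$ available candidates gives exactly the condition $n \lesssim q^{1+1/\lfloor(d-3)/2\rfloor}$, i.e., the claimed length. This is a purely combinatorial/probabilistic existence argument; no curve, no trace, and no constraint like $(r+1)\mid(q-1)$ is needed. What the paper's approach buys is that it \emph{works}; what your approach would buy, if it could be made to work, is explicitness --- but as stated it cannot be completed, for the two reasons above.
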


Again, to the best of our knowledge, this is the first code achieving super linear length in $q$ for $d \ge 5$. The previous best construction due to \cite{ZXL} achieved a length of  $\bigl( \frac{r+1}{r}\bigr) q$ for $d\leq r+1$. We establish Theorem~\ref{thm:intro-construction} via a greedy choice of the columns of the parity check matrix. Explicit constructions of such codes, as well as closing the gap between our upper and lower bounds on code length, are interesting questions for future work.

\medskip\noindent \textbf{Organization of the paper.}
The paper is organized as follows. In Section 2, we provide some preliminaries on locally repairable codes. In Section $3$, we prove our upper bounds on the length of optimal LRCs. In Section $4$, we present the greedy construction of an optimal LRC with super-linear length in its alphabet size.

\section{Preliminaries}
$[n]$ stands for $\{1,\ldots,n\}$. The floor function and ceiling function of $x$ are denoted by $\lfloor x \rfloor$ and $\lceil x \rceil$, respectively. An $[n,k,d]_q$ code is a linear code over the field of size $q$ that has length $n$, dimension $k$, and distance $d$.
We now define the local recoverability property of a code formally. We give this definition in general without assuming linearity, though we restrict our focus to linear codes in this paper.

\begin{defn}\label{def:1}{\rm Let $C$ be a $q$-ary block code of length $n$. For each $\Ga\in\F_q$ and $i\in \{1,2,\cdots, n\}$, define $C(i,\Ga):=\{\bc=(c_1,\dots,c_n)\in C\; : \; c_i=\Ga\}$. For a subset $I\subseteq \{1,2,\cdots, n\}\setminus \{i\}$, we denote by $C_{I}(i,\Ga)$ the projection of $C(i,\Ga)$ on $I$. For $i\in \{1,2,\cdots, n\}$,
a subset $R$ of $\{1,2,\dots,n\}$ that contains $i$ is a called a {\it recovery set} for $i$ if $C_{I_i}(i,\Ga)$ and $C_{I_i}(i,\Gb)$ are disjoint for any $\Ga\neq \Gb$, where $I_i=R\setminus\{i\}$.
Furthermore,  $C$ is called a locally recoverable code with locality $r$ if, for every $i\in \{1,2,\cdots, n\}$, there exists a recovery set $R_i$ for $i$ of size $r+1$.
}\end{defn}

\begin{rmk}{\rm The above definition of recovery sets is slightly different from that of recovery sets given in literature where $i$ is excluded in the recovery set $I_i$. The reason why we include $i$ in the recover set $R_i$ of $i$ is for convenience of proofs in this paper.
}\end{rmk}

For linear codes, which are the focus of this paper, the following lemma establishes a connection between
the locality and the dual code $C^{\perp}$. The proof is folklore, but we include it for the sake of completeness.
\begin{lemma}\label{dual_distance}
A subset $R$ of $\{1,2,\dots,n\}$ is a recovery set at $i$ of a linear code $C$ over $\F_q$ if and only if there exists a codeword in $C^{\perp}$ whose support contains $i$ and is a subset of $R$.
\end{lemma}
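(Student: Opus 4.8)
The plan is to prove both directions by translating the combinatorial definition of a recovery set into a statement about the existence of a suitable dual codeword, using the standard duality between a linear code and its parity checks.

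\medskip\noindent\textbf{The ``if'' direction.} Suppose $\bh=(h_1,\dots,h_n)\in C^\perp$ has support contained in $R$ with $h_i\neq 0$; set $I_i=R\setminus\{i\}$. For any codeword $\bc\in C$ we have $\sum_{j} h_j c_j=0$, and since $h_j=0$ for $j\notin R$ this reads $h_i c_i=-\sum_{j\in I_i} h_j c_j$. Thus if $\bc,\bc'\in C$ agree on $I_i$ but $c_i=\Ga\neq\Gb=c_i'$, then $h_i\Ga=h_i\Gb$, forcing $\Ga=\Gb$ (as $h_i\neq 0$), a contradiction. Equivalently, $C_{I_i}(i,\Ga)$ and $C_{I_i}(i,\Gb)$ are disjoint for $\Ga\neq\Gb$, so $R$ is a recovery set at $i$.

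\medskip\noindent\textbf{The ``only if'' direction.} This is where the real content lies, and is the step I expect to require the most care. Assume $R$ is a recovery set at $i$, i.e.\ the projection map $\pi\colon C\to \F_q^{I_i}$ has the property that $c_i$ is determined by $\pi(\bc)$. Concretely, on the subspace $\pi(C)\subseteq \F_q^{I_i}$ the assignment $\pi(\bc)\mapsto c_i$ is well-defined, and it is clearly $\F_q$-linear; hence it is given by some linear functional, i.e.\ there exist scalars $(g_j)_{j\in I_i}$ with $c_i=\sum_{j\in I_i} g_j c_j$ for all $\bc\in C$. Rearranging, the vector $\bh$ defined by $h_i=1$, $h_j=-g_j$ for $j\in I_i$, and $h_j=0$ otherwise satisfies $\sum_j h_j c_j=0$ for every $\bc\in C$, i.e.\ $\bh\in C^\perp$; its support contains $i$ and lies in $R$, as desired. (If one prefers to avoid appealing to well-definedness of the functional on $\pi(C)$, one can instead argue that the linear map $C\to\F_q^{I_i\cup\{i\}}$, $\bc\mapsto(\bc|_{I_i},c_i)$, has image of the same dimension as $\bc\mapsto \bc|_{I_i}$, so the extra coordinate $c_i$ is a linear combination of the others, which gives the same relation.)

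\medskip\noindent The main obstacle, such as it is, is purely bookkeeping: one must be careful that the ``recovery set'' condition in Definition~\ref{def:1} — phrased via disjointness of the fibers $C_{I_i}(i,\Ga)$ — is exactly the statement that $c_i$ is a (single-valued, hence linear) function of the projection onto $I_i$, and then invoke linear algebra to realize that function as a parity check. No deeper idea is needed.
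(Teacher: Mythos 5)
Your proof is correct and rests on the same underlying linear-algebraic duality as the paper's: both reduce the recovery-set condition to the statement that $c_i$ is a linear function of $(c_j)_{j\in R\setminus\{i\}}$ on $C$, which is precisely the existence of a parity check supported in $R$ and nonzero at $i$. The only stylistic difference is that you argue the ``only if'' direction directly (well-definedness plus linearity of the recovery functional, then extension to all of $\F_q^{I_i}$), whereas the paper runs the contrapositive through a generator matrix and a small case analysis; your route is a bit cleaner but not a genuinely different idea.
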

\begin{proof} Let  $G=(\bg_1,\ldots,\bg_n)\in \F_q^{k\times n}$ be a generator matrix  of $C$, where $\bg_j$ are column vectors of length $k$. Assume that $R$ is a  recovery set at $i$.
We prove the claim by contradiction. Suppose that there exists $i\in R$ such that  $C^{\perp}$ contains no codeword whose support contains $i$ and is a subset of $R$. This implies that $\bg_i$ is not a linear combination of $\{\bg_j\}_{j\in R\setminus\{i\}}$. Thus, $\bg_i\neq {\bf 0}$. If $\bg_j={\bf 0}$ for all $j\in R\setminus\{i\}$, then $C_{R\setminus\{i\}}(i,\Ga)\cap C_{R\setminus\{i\}}(i,\Gb)$ contains the zero vector for $\Ga,\Gb\in\F_q$. This is a contradiction to the definition of recovery sets.

Now assume that not all $\{\bg_j\}_{j\in R\setminus\{i\}}$ are the zero vector. Partition $R\setminus\{i\}$ into two disjoint sets $I$ and $J$ such that (i) the vectors $\{\bg_j\}_{j\in I}$ are linearly independent; and (ii) all vectors in $\{\bg_j\}_{j\in J}$ are linear combinations of $\{\bg_j\}_{j\in I}$. This implies that there exists a matrix $A$ of size $|I|\times |J|$ such that,  for every codeword $\bc\in C$, the projection $\bc_J$ of $\bc$ at $J$ is equal to $\bc_IA$. As  $\bg_i$ is not a linear combination of $\{\bg_j\}_{j\in R\setminus\{i\}}$, it follows that $\bg_i$ is not a linear combination of $\{\bg_j\}_{j\in I}$, i.e., $\bg_i$ and $\{\bg_j\}_{j\in I}$ are linearly independent.
 Thus, the set
$
\{\bx (\{\bg_j\}_{j\in I},\bg_i):\; \bx\in\F_q^k\}
$
is the entire space $\F_q^{|I|+1}$. This implies that, for any $\Ga,\Gb\in\F_q$ and $I\subset R\setminus\{i\}$, both the set $C_{I}(i,\Ga)$ and $C_{I}(i,\Gb)$ are equal to $\F_q^{|I|}$. Hence,
$C_{R\setminus\{i\}}(i,\Ga)=\left\{(\bu,\bu A):\bu \in\F_q^{|I|}\right\}=C_{R\setminus\{i\}}(i,\Gb)$.
This is a contradiction to the definition of recovery sets.

The other direction is obvious by the definition.
\end{proof}

For a $q$-ary $[n,k,d]$-linear LRC with locality
$r$, the Singleton-type bound says
\begin{equation}\label{eq:1}
d\leq n-k-\left\lceil \frac{k}{r}\right\rceil+2.
\end{equation}
Like the classical Singleton bound,  the Singleton-type bound \eqref{eq:1} does not take into account the cardinality of the code alphabet $q$. Augmenting this result, a recent work \cite{CM13} established a bound on the distance of locally repairable codes that depends on $q$, sometimes yielding better results.
However, in this paper, we specifically refer as optimal LRC a linear code achieving the bound \eqref{eq:1}. We now rewrite this bound in a form that will be more convenient to us.

\begin{lemma}\label{lem:2.1} Let $n,k,d,r$ be positive integers with $(r+1)|n$. If the Singleton-type bound \eqref{eq:1} is achieved, then
\begin{equation}\label{eq:2}
n-k=\frac{n}{r+1}+d-2-\left\lfloor\frac{d-2}{r+1}\right\rfloor.
\end{equation}
\end{lemma}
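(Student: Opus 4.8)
The plan is to start from the Singleton-type bound in its equality form, $d = n-k-\lceil k/r\rceil +2$, and simply solve for $n-k$, the only subtlety being to convert the ceiling term $\lceil k/r\rceil$ into something expressed via $n$, $d$, $r$ using the hypothesis $(r+1)\mid n$. First I would set $m = n/(r+1)$, so $n = m(r+1)$. Rearranging the equality gives $\lceil k/r\rceil = n-k-d+2$. The natural guess, consistent with the known optimal constructions, is that $k = m r - (d-2) + \lfloor (d-2)/(r+1)\rfloor$ is forced, or at least that $k$ lies in a residue class mod $r$ that makes the ceiling computable; I would verify this by a short case analysis on $k \bmod r$.

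Concretely, write $k = ar + b$ with $0 \le b < r$ (or $0 \le b \le r-1$), so $\lceil k/r\rceil = a + \lceil b/r \rceil = a$ if $b=0$ and $a+1$ otherwise. Plugging into $d = n-k-\lceil k/r\rceil+2$ and using $n = m(r+1)$, one gets a linear relation among $m$, $a$, $b$, $d$. Solving, $n-k = m(r+1) - ar - b$, and the equality bound pins down $a+\mathbf{1}[b\neq 0]$ in terms of $m-a$ and $d$. The arithmetic should show that $b$ is determined (it equals $r - ((d-2)\bmod(r+1))$ or $0$, depending on whether $(r+1)\mid(d-2)$), and that $n-k = m + (d-2) - \lfloor (d-2)/(r+1)\rfloor$ regardless. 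That is exactly \eqref{eq:2}. I would present this by substituting $k = n - (n-k)$ into \eqref{eq:1} at equality, isolating $\lceil k/r\rceil = (n-k) - d + 2$, and then checking that the candidate value $n-k = \tfrac{n}{r+1} + d - 2 - \lfloor\tfrac{d-2}{r+1}\rfloor$ makes $\lceil k/r \rceil$ consistent, using the identity $\lceil x/r\rceil = \lceil (x \bmod (r(r+1)))/r \rceil + \ldots$ — actually more simply, using $k/r = \tfrac{n}{r} - \tfrac{n-k}{r}$ and $\tfrac{n}{r} = \tfrac{n}{r+1} + \tfrac{n}{r(r+1)}$ together with $(r+1)\mid n$.

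The main obstacle — really the only place any care is needed — is the manipulation of the ceiling and floor functions: one must be sure that $\lceil k/r\rceil$ evaluated at the claimed $k$ genuinely equals $(n-k)-d+2$, which requires knowing the fractional part of $k/r$, and this in turn hinges on the interaction between the divisibility $(r+1)\mid n$ and the term $\lfloor (d-2)/(r+1)\rfloor$. I expect this to reduce to the elementary fact that for integers, $\lceil \tfrac{N - \lfloor N/(r+1)\rfloor}{r}\rceil$ simplifies cleanly when $(r+1)\mid$ (something), but I would just verify it by writing $d-2 = q_0(r+1) + s$ with $0 \le s \le r$ and tracking the two cases $s=0$ and $s>0$ separately; in each case the claimed formula falls out after a line of algebra. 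No deeper idea is needed — the lemma is a bookkeeping restatement of \eqref{eq:1} tailored to the disjoint-recovery-group setting $(r+1)\mid n$ used throughout the paper.
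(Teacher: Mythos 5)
Your plan is correct and matches the paper's proof in essence: start from $d=n-k-\lceil k/r\rceil+2$, reduce to determining $k \bmod r$, and use $(r+1)\mid n$ to pin down the residue so that the ceiling evaluates cleanly. The paper streamlines the bookkeeping by writing $k = ar - b$ with $0\le b\le r-1$ (so $\lceil k/r\rceil = a$ with no case split), which immediately forces $(d-2-b)/(r+1)\in\ZZ$ and hence $b=(d-2)\bmod(r+1)$, avoiding the separate $b=0$ and $b>0$ cases your decomposition $k=ar+b$ would require.
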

\begin{proof} Assume that  the Singleton-type bound \eqref{eq:1} is achieved, i.e., $d=n-k-\left\lceil \frac{k}{r}\right\rceil+2$.  Write $k=ar-b$ for some integers $a\ge 1$ and $0\le b\le r-1$. Then $d=n-k-a+2=n-(ar-b)-a+2$. This gives $a=\frac n{r+1}-\frac{d-2-b}{r+1}$. This also implies that $\frac{d-2-b}{r+1}$ is an integer. Therefore, we must have $\frac{d-2-b}{r+1}=\left\lfloor\frac{d-2}{r+1}\right\rfloor$ as $0\le b\le r-1$. The desired results follows as $$d=n-k-a+2=n-k-\frac{n}{r+1}+\frac{d-2+b}{r+1}+2=n-k-\frac{n}{r+1}+\left\lfloor\frac{d-2}{r+1}\right\rfloor+2\ . \qedhere $$
\end{proof}

\begin{rmk}
\label{rmk:d=r+2}
{\rm It turns out that the other direction of Lemma \ref{lem:2.1} is also true if $d-2\not\equiv r\pmod{r+1}$.
 }
\end{rmk}



\section{An Upper Bound on Code Lengths}
In this section, we investigate the  upper bound on the code lengths of optimal LRCs over a finite field $\F_q$.
For simplicity, we assume that $n$ is divisible by $r+1$ throughout this section.
However, in Remark \ref{rmk:divisible} and \ref{rmk:nondiv}, we extend our results to cover the cases when $n$ is not divisible by $r+1$.
We end this section by another upper bound that handles the case $d=O(n)$.

\subsection{Justifying the assumption of disjoint recovery sets}

We first argue that a $r$-local LRC with block length $n$ divisible by $r+1$ can be assumed, under modest conditions on the parameters, to contain $n/(r+1)$ disjoint recovery sets that each allow for recovery of $(r+1)$ codeword symbols. This structure will then be helpful to us in upper bounding the length of LRCs.

We remark that the structure theorem in~\cite{GHSY12} showed that the information symbols can be arranged into $k/r$ disjoint groups each with a local parity check, under the assumption that $r | k$. However, we seek all-symbol's locality, and their argument does not directly apply.


%
\begin{lemma}\label{lem:disjoint}
Let $C$ be an $[n,k,d]_q$ linear optimal LRC with locality $r$. Then, there exist $\frac{n}{r+1}$ disjoint recovery sets, each of size $r+1$ provided that
\begin{equation}\label{eq:3}\frac{n}{r+1}\geq \left(d-2-\left\lfloor\frac{d-2}{r+1}\right\rfloor\right)(3r+2)+\left\lfloor\frac{d-2}{r+1}\right\rfloor+1.\end{equation}
\end{lemma}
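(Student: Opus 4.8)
The plan is to pass to the dual code $C^{\perp}$. By Lemma~\ref{dual_distance}, a set $R\ni i$ is a recovery set for $i$ precisely when $C^{\perp}$ has a codeword whose support contains $i$ and lies inside $R$, so it is enough to exhibit \emph{exactly} $m:=n/(r+1)$ codewords of $C^{\perp}$ of weight at most $r+1$ whose supports are pairwise disjoint and cover $[n]$: the $m$ disjoint supports then have total size $n=m(r+1)$, hence each has size exactly $r+1$ and they partition $[n]$, and each block is a recovery set for every one of its coordinates. Two standard facts will be used repeatedly: (i) codewords of $C^{\perp}$ with pairwise disjoint supports are linearly independent, so at most $\dim C^{\perp}$ of them exist; and (ii) since $d(C)=d$, every $d-1$ columns of a parity-check matrix of $C$ are linearly independent, equivalently $\proj_A(C^{\perp})$ has dimension $|A|$ for every $A\subseteq[n]$ with $|A|\le d-1$. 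By Lemma~\ref{lem:2.1}, $\dim C^{\perp}=n-k=m+\delta$, where I abbreviate $t:=\lfloor(d-2)/(r+1)\rfloor$ and $\delta:=(d-2)-t$, so that hypothesis~\eqref{eq:3} reads $m\ge\delta(3r+2)+t+1$.

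First I would fix an extremal family. Let $\mS$ be the set of supports of weight-$\le(r+1)$ codewords of $C^{\perp}$; locality $r$ says $\mS$ covers $[n]$. Choose pairwise disjoint $S_1,\dots,S_N\in\mS$ that is maximal --- no further member of $\mS$ is disjoint from all of them --- with witnessing codewords $c_1,\dots,c_N\in C^{\perp}$, and put $T=\bigcup_\ell S_\ell$, $U=[n]\setminus T$. Then $|T|=\sum_\ell|S_\ell|\le N(r+1)$, so $N\ge m-|U|/(r+1)$, and for every $i\in U$ a recovery codeword $c_i'$ has support $T_i\ni i$ with $|T_i|\le r+1$ and, by maximality of the family, $T_i\cap T\ne\emptyset$.

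The core estimate controls $|U|$. Since $c_1,\dots,c_N$ are linearly independent and vanish on $U$, $\dim\proj_U(C^{\perp})\le(m+\delta)-N$. If $|U|\le d-1$, then $\dim\proj_U(C^{\perp})=|U|$ by (ii), and combining with $N\ge m-|U|/(r+1)$ forces $|U|\,r/(r+1)\le\delta$, i.e.\ $|U|\le\delta(r+1)/r$; moreover the elementary inequality $d-2\le t(r+1)+r$ (immediate from the definition of $t$) shows $\delta(r+1)/r\le d-1$, so this regime is internally consistent. If instead $|U|\ge d$, then applying (ii) to any $(d-1)$-subset of $U$ gives $N\le m-(t+1)$, whence $|U|\ge(t+1)(r+1)$; I would dispose of this by iterating the same argument inside $U$, since the projections $c_i'|_U$ have weight $\le r$ and still cover $U$ while $\dim\proj_U(C^{\perp})\le\delta+|U|/(r+1)<m+\delta$, so an induction on the redundancy terminates. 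Either way, the outcome of this phase is a family of pairwise disjoint recovery supports covering all but at most $\delta(r+1)/r=O(\delta)$ coordinates, with at most $O(\delta)$ of the groups ``irregular'' --- of size $<r+1$, or disturbed by the iteration.

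It remains to reorganize everything into an honest partition of $[n]$ into blocks of size exactly $r+1$, each a recovery set for all its coordinates, and I expect this to be the main obstacle. The $O(\delta)$ irregular groups together with the $O(\delta)$ leftover coordinates must be merged and repacked into blocks of size $r+1$; absorbing a leftover coordinate into a group ejects another coordinate, which must in turn be re-placed, and one has to bound the length of such displacement chains and the number of perfect groups they disturb --- morally each ``defect'' can contaminate at most three groups of size $r+1$, which is the source of the factor $3r+2=3(r+1)-1$. Hypothesis~\eqref{eq:3} is precisely the statement that the $\ge m$ groups produced outnumber the $O(\delta)$ defect-affected ones with room to spare (the extra $t+1$ accounting for the $|U|\ge d$ case), so that enough untouched size-$(r+1)$ groups remain to absorb the slack. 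Making the constants in this bookkeeping match so that the hypothesis needed is exactly~\eqref{eq:3} --- rather than a weaker inequality --- is the delicate part.
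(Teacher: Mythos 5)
Your setup is fine---reducing to disjoint supports of low-weight dual codewords, using that disjoint-support codewords are linearly independent, and using that $\proj_A(C^{\perp})$ has full dimension for $|A|\le d-1$ are all correct and are also the ingredients the paper uses. The computation giving $|U|\le \delta(r+1)/r$ in the $|U|\le d-1$ case is also correct. But the argument as written has two genuine gaps. First, the $|U|\ge d$ case is not actually handled: ``iterating the same argument inside $U$'' and ``an induction on the redundancy terminates'' are not a proof, and it is not clear what inductive statement you would prove or why it applies, since $\proj_U(C^\perp)$ is not the dual of an optimal LRC and the codewords $c_i'|_U$ need not even be supported inside $U$ (their supports meet $T$). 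Second and more fundamentally, the ``reorganization'' phase is entirely heuristic and, as described, cannot work: you cannot move a coordinate from one block to another and still have a recovery set, because a recovery set for $i$ must be the support of a dual codeword containing $i$ (Lemma~\ref{dual_distance}). Ejecting a coordinate from $\Supp(c_\ell)$ does not produce a dual codeword supported on the smaller set, and absorbing a leftover coordinate $u\in U$ into a block $S_\ell$ requires a dual codeword supported inside $S_\ell\cup\{u\}$ through $u$, which you have no reason to have. Your bookkeeping intuition that $3r+2$ comes from ``each defect contaminating three groups'' does not correspond to any mechanism you have established, and you yourself flag that the constants are not matched.

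The paper avoids all of this by arguing \emph{indirectly} that no repacking is needed. Instead of taking a maximal disjoint family and then trying to fix up $U$, it takes a greedy \emph{covering} $R_1,\dots,R_\ell$ of $[n]$ by supports of dual codewords (so $U=\emptyset$ by construction, at the cost that the $R_i$ may overlap), notes $\frac{n}{r+1}\le\ell\le n-k$, and shows that $\ell>\frac{n}{r+1}$ leads to a contradiction. For the contradiction, it first isolates $b=\lfloor(d-2)/(r+1)\rfloor+1$ of the size-$(r+1)$ sets that are disjoint from \emph{all} other $R_j$ (this is where $3r+2=r+2(r+1)$ arises, as a simple overlap-removal count). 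Then it writes a parity-check matrix whose first $\ell$ rows are the $c_i$ and observes that, restricted to the $b(r+1)\ge d-1$ columns indexed by $R_1\cup\cdots\cup R_b$, the matrix has at most $b+(n-k-\ell)\le d-2$ nonzero rows once $\ell\ge\frac{n}{r+1}+1$; this contradicts the fact that any $d-1$ columns are independent. So disjointness is forced outright, with no need to re-partition coordinates. If you want to salvage your route, the step to rethink is the claim that irregular blocks and leftovers can be ``merged and repacked''---you would need to produce actual dual codewords supported on the new blocks, and nothing in your argument does that.
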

\begin{proof} Put $h=d-2-\lfloor\frac{d-2}{r+1}\rfloor$. Then
$n-k=\frac{n}{r+1}+h$ and $h\leq d-2$. Lemma~\ref{lem:2.1} implies that the parity-check matrix of
code $C$ must has size $(\frac{n}{r+1}+h)\times n$.
We now construct a parity-check matrix of $C$ as follows. First, we arbitrarily choose $i_1\in\{1,2,\dots,n\}$. By Lemma \ref{dual_distance}, there is a codeword $\bc_1$ of $C^\perp$ such that $\Supp(\bc_1)$ contains $i_1$ and has size at most $r+1$. Put $R_1=\Supp(\bc_1)$ and choose $i_2\in\{1,2,\dots,n\}\setminus R_1$.  By Lemma \ref{dual_distance} again, there is a codeword $\bc_2$ of $C^\perp$ such that $\Supp(\bc_2)$ contains $i_2$ and has size at most $r+1$. Put $R_2=\Supp(\bc_1)$ and choose $i_3\in\{1,2,\dots,n\}\setminus (R_1\cup R_2)$. Continue in this fashion to get $\ell$ codewords $\bc_i\in C^\perp$ and $R_i=\Supp(\bc_i)\subset \{1,2,\dots,n\}$  for $1\le i\le \ell$ such that $R_1,R_2,\dots,R_\ell$ are pairwise distinct and $\cup_{i=1}^\ell R_i=\{1,2,\dots,n\}$. As $\bc_1,\bc_2,\dots,\bc_\ell$ are linearly independent, we have $\frac{n}{r+1}\le \ell\le n-k=\frac{n}{r+1}+h$. It is clear that $\ell=\frac{n}{r+1}$ if and only if the sets $R_1,R_2,\dots,R_\ell$ are pairwise disjoint.

We claim that  $\ell$ must be equal to $\frac{n}{r+1}$. Then our desired result follows. Suppose that this claim is not true, i.e, $\ell\ge \frac{n}{r+1}+1$.
Assume that there are $a$ recovery sets with size less than $r+1$. Since the union of $R_1,\ldots,R_\ell$  is $\{1,2,\dots,n\}$ , we have $ar+(\ell-a)(r+1)\geq n$, i.e., $a\leq \ell (r+1)-n$. Hence, the number of $(r+1)$-sized recovery sets satisfies $\ell-a\geq n-\ell r\geq \frac{n}{r+1}-hr$.
Without loss of generality, we may assume that $R_1,\ldots,R_{\ell_1}$ with $\ell_1= \frac{n}{r+1}-hr$ are of size $r+1$. We are going to show that there are at least $b:=\lfloor\frac{d-2}{r+1}\rfloor+1$ pairwise disjoint sets, say $R_1,R_2,\ldots, R_b$, among $R_1,\ldots,R_{\ell_1}$ such that $R_i\cap R_j=\emptyset$ for all $1\le i\le b$ and $1\le j \le \ell$, $j\neq i$.
Since $\cup_{i=1}^\ell R_i$ is $\{1,2,\dots,n\}$, we have
$$
\sum_{j=1}^\ell|R_j|-|\bigcup_{j=1}^\ell R_j|=\sum_{\alpha=1}^{n}(\sum_{j=1,\alpha \in R_j}^{\ell} 1-1)
$$
If $\sum_{j=1,\alpha \in R_j}^{n} 1>1$ for some $\alpha$, we remove all the sets $R_j$ that contain this $\alpha$.
Note that
$$
\sum_{j=1}^\ell|R_j|-|\bigcup_{j=1}^\ell R_j|\leq\left(\frac n{r+1}+h\right)(r+1)-n=h(r+1)
$$
This implies that we remove at most $2h(r+1)$ sets from $R_1,\ldots,R_{\ell_1}$. Let $R_1,R_2,\ldots, R_e$ be the sets left after this operation.
From our argument, we know that if $\alpha \in R_j$ for $1\le j \le e$, then $\alpha$ does not belong to any other set in $R_1,\ldots,R_{\ell}$.
This implies our requirement that $R_i\cap R_j=\emptyset$ for all $1\le i\le e$ and $1\le j \le \ell$,$j\neq i$.
It remains to lower bound $e$. Since we remove at most $2h(r+1)$ sets from $R_1,\ldots,R_{\ell_1}$, we have
\begin{equation}\label{eq:5}
e \ge \ell_1-2h(r+1)=\frac{n}{r+1}-h(3r+2)\ge \left\lfloor\frac{d-2}{r+1}\right\rfloor+1 = b \ , \end{equation}
where the inequality in \eqref{eq:5}  is due to the
condition given in \eqref{eq:3}.

%



Now let us construct a parity-check matrix $H$ as follows. We take the first $\ell$ rows to be $\bc_1,\bc_2,\dots,\bc_\ell$ (note that these vectors are linearly independent). Then extend arbitrarily to an parity-check matrix $H$ of size $(n-k)\times n$. We reconsider the submatrix $H_1$ of $H$ consisting of the first $b(r+1)$ columns. Then it must have the following form
\begin{equation}\label{eq:matrix}
H_1=\left(
  \begin{array}{c}
    \bv_1 \\
    \bv_2 \\
    \vdots \\
    \bv_b \\
  O  \\
    H_2 \\
  \end{array}
\right),
\end{equation}
where $\bv_1,\ldots,\bv_b\in \F_q^{b(r+1)}$, $O$ is a $(\ell-b)\times b(r+1)$ zero matrix and $H_2$ is a $(\frac{n}{r+1}+h-\ell)\times b(r+1)$ matrix. Note that $b(r+1)\ge d-1$ and $C$ has minimum distance $d$. This implies
 any $d-1$ columns of $H_1$ are linearly independent or equivialently
the rank of $H_1$ is at least $d-1$. However, the number of nonzero rows of $H_1$ is at most
$$b+\frac{n}{r+1}+h-\ell\leq b+h-1=\left\lfloor\frac{d-2}{r+1}\right\rfloor+1+d-2-\left\lfloor\frac{d-2}{r+1}\right\rfloor-1=d-2.$$
The first inequality follows from the assumption $\ell\geq \frac{n}{r+1}+1$. This contradiction concludes that $\ell=\frac{n}{r+1}$ and the desired result follows.
\end{proof}


\begin{rmk}\label{rmk:divisible}
{\rm
A similar result is still hold when $n$ is not divisible by $r+1$.
In this case, the minimum number of recovery sets covering $n$ indices becomes $\lceil \frac{n}{r+1} \rceil$.
%
Let us start from the code meeting the Singleton-type bound,
$$
d=n-k-\lceil\frac{k}{r}\rceil+2\geq n-k-\frac{k}{r}-1+2.
$$
It follows that
$$
k\geq \frac{r}{r+1}(n-(d-2)-1),
$$
and then
$$
n-k\leq \frac{n}{r+1}+d-2-\frac{d-2}{r+1}+\frac{r}{r+1}.
$$
Since $n-k$ is an integer, we have
$$n-k\leq \lceil \frac{n}{r+1} \rceil+d-2-\lfloor \frac{d-2}{r+1} \rfloor.$$
The fact that the number of
recovery sets covering all indices is at least $\lceil \frac{n}{r+1} \rceil$ leads to $h=n-k-\lceil\frac{n}{r+1}\rceil\leq d-2-\lfloor \frac{d-2}{r+1} \rfloor$.
The rest of the proof is the same.  }
\end{rmk}
\subsection{Proving the upper bound}
In this subsection, we prove Theorem~\ref{thm:intro-upperbound} (restated more formally below) that gives an upper bound on the length $n$ of a LRC in terms of its alphabet size $q$. The parity check view of an LRC will be instrumental in our argument. We will make use of Lemma~\ref{lem:disjoint} and the classical Hamming upper bound on the size of codes as a function of minimum distance to derive our result.

\begin{theorem}\label{thm:upperbound}
Let $C$ be an optimal $[n,k,d]_q$-linear locally repairable codes of locality $r$ with $(r+1) | n$ and parameters satisfying the inequality \eqref{eq:3} given in Lemma~ {\rm \ref{lem:disjoint}}.
If $d\ge 5$ and $d\equiv a\pmod{4}$ for some $1\le a\le 4$, then
\begin{equation}n=\left\{ \begin{array}{ll}O(dq^{\frac{4(d-2)}{d-a}-1}) &\mbox{if $a=1,2$},\\
O(dq^{\frac{4(d-3)}{d-a}-1})&\mbox{if $a=3,4$.}
\end{array}\right.\end{equation}  In particular, we have
$n=O\left(dq^{3+\frac{4}{d-4}}\right)$. Furthermore, we have $n=O(q^2)$, $O(q^3)$, $O(q^3)$, $O(q^4)$, $O(q^{2.5})$ and $O(q^3)$ for $d=5,6,7,8,9,$ and $10$, respectively.
\end{theorem}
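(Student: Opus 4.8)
The plan is to reduce an optimal LRC to a situation where the classical Hamming (sphere-packing) bound forces $n$ to be small, by carefully exploiting the parity-check structure that Lemma~\ref{lem:disjoint} makes available. By Lemma~\ref{lem:disjoint}, under the hypothesis \eqref{eq:3} we may assume $C$ has a parity-check matrix $H$ whose first $\frac{n}{r+1}$ rows are the indicator vectors $\bc_1,\dots,\bc_{n/(r+1)}$ of $\frac{n}{r+1}$ disjoint recovery sets of size $r+1$, and the remaining $h=d-2-\lfloor\frac{d-2}{r+1}\rfloor$ rows are ``heavy'' parity checks. The first step is to use the disjoint local checks to eliminate one coordinate from each recovery set: within recovery set $R_j$, the local check $\bc_j$ lets us express one designated symbol as a linear combination of the other $r$, so after a change of variables we obtain a \emph{shortened/punctured} code $C'$ of length $n' = \frac{rn}{r+1}$ whose parity-check matrix is essentially the $h\times n'$ ``heavy'' part $H'$ of $H$, still of minimum distance $\ge d - (\text{something small})$ --- one must check that puncturing out the $\frac{n}{r+1}$ redundant coordinates costs at most $\lfloor\frac{d-2}{r+1}\rfloor$ in the distance, so that $C'$ has distance $\ge d - \lfloor\frac{d-2}{r+1}\rfloor$. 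Since $H'$ has only $h$ rows, $C'$ is a code of dimension $\ge n' - h$ over $\F_q$, and any $\le d-\lfloor\frac{d-2}{r+1}\rfloor -1$ columns of $H'$ must be linearly independent.

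The second, main step is to apply the Hamming bound to the dual-like object living in $\F_q^h$: the $n'$ columns of $H'$, viewed as points in $\F_q^h$ (projectively, up to scalar), have the property that any $t := d - \lfloor\frac{d-2}{r+1}\rfloor - 1$ of them are linearly independent, i.e.\ they form an arc-like/MDS-type configuration, equivalently the code with generator matrix $H'$ has minimum distance $\ge t+1$. But a code of length $n'$, ``dimension'' $h$, minimum distance $\delta \approx t$ over $\F_q$ must satisfy the Hamming bound $q^h \cdot \mathrm{Vol}(\lfloor(\delta-1)/2\rfloor) \le q^{n'}$, equivalently $n' \le h + \log_q \binom{n'}{\lfloor(\delta-1)/2\rfloor}(q-1)^{\lfloor(\delta-1)/2\rfloor}$, and since $\binom{n'}{s}(q-1)^s \le n'^s q^s$ roughly, this gives $n' \lesssim \frac{h}{1 - s/\,?}\cdots$ --- more precisely one gets an inequality of the form $n' \le \frac{s}{\,?\,}\cdot$ after rearranging, which when one plugs in $h = d-2-\lfloor\frac{d-2}{r+1}\rfloor$, $\delta - 1 \ge d - 1 - \lfloor\frac{d-2}{r+1}\rfloor - 1$, and $s = \lfloor(\delta-1)/2\rfloor$, yields $n' = O(d\, q^{h/s - 1})$ up to constants depending on $d$. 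Tracking the floor functions: when $d\equiv a \pmod 4$ the quantity $h/s$ works out to $\frac{4(d-2)}{d-a}$ for $a=1,2$ and $\frac{4(d-3)}{d-a}$ for $a=3,4$ (the difference between $a\le 2$ and $a\ge 3$ coming from whether $\lfloor(\delta-1)/2\rfloor$ rounds down by an extra half-unit), and since $n = \frac{r+1}{r} n' = \Theta(n')$ this gives exactly the stated bounds. For $d=5$ one has $h = 3$ (since $\lfloor 3/(r+1)\rfloor = 0$ for $r\ge 3$), $\delta \ge 4$, $s = 1$, so the $n'$ points of $\F_q^3$ are pairwise linearly independent, i.e.\ distinct projective points, hence $n' \le \frac{q^3-1}{q-1} = O(q^2)$, recovering $n = O(q^2)$.

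The last step is the bookkeeping to show $\frac{4(d-2)}{d-a} \le 4 + \frac{8}{d-4}$ and $\frac{4(d-3)}{d-a} \le 4 + \frac{8}{d-4}$ hold so that the uniform bound $n = O(d q^{3 + 4/(d-4)})$ follows from the case analysis (here one should double-check the exponent: $\frac{4(d-2)}{d-a} - 1$ with $a \le 2$ is at most $\frac{4(d-2)}{d-2} - 1 = 3$ when $a=2$, and at most $\frac{4(d-2)}{d-1}-1 \le 3$ when $a=1$ --- wait, that is $< 3$ --- so the worst case is $4\mid d$, $a=4$, giving $\frac{4(d-3)}{d-4} - 1 = 3 + \frac{8}{d-4} - \frac{?}{}$, which after simplification is $3 + \frac{4}{d-4}$, matching the theorem), and to read off the small cases $d=6,\dots,10$ by direct substitution. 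I expect the main obstacle to be step one: precisely controlling how much minimum distance is lost when one ``solves out'' the $\frac{n}{r+1}$ local coordinates and passes to $H'$, and making sure the resulting configuration of columns in $\F_q^h$ genuinely has the claimed pairwise-or-$t$-wise independence so that the Hamming bound applies cleanly. A secondary subtlety is that the Hamming-bound manipulation only gives a useful (sublinear-exponent) bound when $s < h$, i.e.\ when $d$ is not too close to $n$; this is consistent with the paper's remark that the bound ``yields nothing when $d$ is proportional to $n$,'' and is handled separately, so here I would simply note the regime of validity rather than push further.
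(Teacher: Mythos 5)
Your overall reduction is the same as the paper's: pass from $C$ to a code of length $\frac{rn}{r+1}$ with only $h=d-2-\lfloor\frac{d-2}{r+1}\rfloor$ rows in its parity-check matrix by ``solving out'' one designated coordinate per disjoint recovery set (equivalently, substituting $c_{i,r+1}=-\sum_{j\le r}c_{i,j}$ into the heavy checks, which replaces columns $\bh_{i,j}$ by the truncations of $\bh_{i,j}-\bh_{i,r+1}$), and then apply the Hamming bound to the resulting $[\,\frac{rn}{r+1},\ \ge \frac{rn}{r+1}-h,\ \ge\delta\,]_q$ code. Where the proposal goes wrong is the distance estimate for this reduced code, and you yourself flagged this as ``the main obstacle.''

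You claim the reduced distance is $\delta\ge d-\lfloor\frac{d-2}{r+1}\rfloor$, i.e.\ roughly $d$. That is false, and the gap is not small: the correct bound is roughly $d/2$. Concretely, if $\bc\in C$ has weight $w$ and its support meets $s$ of the disjoint recovery sets, then each such set contributes at least two support positions (the local all-ones check $\bc_i$ has full support on $R_i$, so $\langle\bc,\bc_i\rangle=0$ forbids a lone nonzero there), hence $s\le w/2$; the puncturing deletes at most one position per touched set, so the reduced weight is $\ge w-s\ge \lceil w/2\rceil$. Thus one can only conclude $\delta\ge\lceil d/2\rceil=\lfloor\frac{d-1}{2}\rfloor+1$. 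The paper obtains exactly this via the observation that the new columns are differences $\bh_{i,j}-\bh_{i,r+1}$: a linear dependence among $t$ of them unfolds into a dependence among at most $2t$ distinct columns of $H$, which is impossible as long as $2t\le d-1$. This factor-of-two loss is essential; it is the first of the two halvings ($d\to d/2$ from the puncturing, then $\to(d-a)/4$ as the Hamming packing radius) that produce the $d-a$ in the denominators of the exponents. Your own final formula $h/s=\frac{4(d-2)}{d-a}$ actually presupposes $s\approx(d-a)/4$, hence $\delta\approx d/2$, and is therefore inconsistent with the $\delta\approx d$ you state; had you carried $\delta\approx d$ through, you would get $n=O(q^{1-2/(d-1)})$, which contradicts the $\Omega(q^2)$ construction for $d=5$ in Section~4. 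With $\delta\ge\lfloor\frac{d-1}{2}\rfloor+1$ in hand, the rest of your outline (Hamming bound, casework on $d\bmod 4$ for the parity of $\delta$, the $d=5$ specialization to distinct projective points in $\PP(\F_q)^{2}$) is correct and matches the paper.
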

\begin{proof} Again we
let $n-k=\frac{n}{r+1}+h$ with $h=d-2-\lfloor\frac{d-2}{r+1}\rfloor\leq d-2$.
By Theorem~\ref{lem:disjoint}, we know that there exist $\ell:=\frac{n}{r+1}$ codewords $\bc_1,\ldots,\bc_\ell$ of $C^\perp$ such that the supports $\Supp(\bc_1),\ldots,\Supp(\bc_\ell)$, each of size $r+1$, are pairwise disjoint. Put $R_i=\Supp(\bc_i)$. By considering an equivalent code, we may assume that $R_i=\{(i-1)(r+1)+1,\ldots,i(r+1)\}$ for $i=1,2,\dots,\ell$ and the projection of $\bc_i$ at $R_i$ are equal to all-one vector ${\bf 1}$ of length $r+1$.

The parity-check matrix $H$ has the following form
\begin{equation}\label{eq:paritycheck}
H=\left(
  \begin{array}{c}
    \begin{array}{c|c|ccc|c}
      {\bf 1} & \mathbf{0} & \cdots & \cdots & \cdots &\mathbf{0} \\
      \mathbf{0} & {\bf 1} & \cdots &\cdots &  \cdots & \mathbf{0} \\
      \vdots & \vdots & \ddots &\ddots & \ddots &\vdots \\
      \mathbf{0} & \mathbf{0} & \cdots & \cdots  & \cdots & {\bf 1}
    \end{array}
     \\ \hdashline
    \huge{A} \\
  \end{array}
\right),
\end{equation}
where $A$ is an $h\times n$ matrix over $\F_q$.
The submatrix consisting of the first $\ell$ rows of $H$ is a block diagonal matrix.
Let $\bh_{i,j}$ be the $(i(r+1)+j)$-th column of $H$, i.e.,
\begin{equation}\label{eq:columnvector}
\bh_{i,j}=(\underbrace{0,\ldots,0}_{i-1},1,\underbrace{0,\ldots,0}_{\ell-i},\bv_{i,j})^T
\end{equation}
for some $\bv_{i,j}\in \F_q^{h}$, where $T$ stands for transpose.

Define
$$\bh'_{i,j}:=\bh_{i,j}-\bh_{i,r+1}=
(\underbrace{0,\ldots,0}_\ell,\bv_{i,j}-\bv_{i,r+1})^T$$
for $i\in [\ell]$ and $j\in [r]$.
We claim that any $\lfloor\frac{d-1}{2}\rfloor$ of $\bh'_{1,1},\ldots,\bh'_{\ell,r}$ are linearly independent.
Indeed, for any $t:=\lfloor\frac{d-1}{2}\rfloor$ vectors $\bh'_{i_1,j_1},\ldots,\bh'_{i_t,j_t}$ and  scalars $\lambda_{i_1,j_1},\ldots,\lambda_{i_t,j_t}\in \F_q$ satisfying
$
\sum_{k=1}^{t}\lambda_{i_k,j_k}\bh'_{i_k,j_k}={\bf 0}$, i.e.,
$\sum_{k=1}^{t}\lambda_{i_k,j_k}(\bh_{i_k,j_k}-\bh_{i_k,r+1})={\bf 0}$, we have $\sum_{k=1}^{t}\lambda_{i_k,j_k}\bh_{i_k,j_k}-\sum_{k=1}^{t}\lambda_{i_k,j_k}\bh_{i_k,r+1}={\bf 0}$.

Note that $\bh_{i_1,j_1},\ldots,\bh_{i_k,j_k}$ together with $\bh_{i_1,r+1},\ldots,\bh_{i_k,r+1}$ are at most $2t\leq d-1$ distinct
columns of $H$. It follows that they are linearly independent and thus the coefficient $\lambda_{i_1,j_1},\ldots, \lambda_{i_t,j_t}$ must be all zero.

 Moreover, we note that the first $\ell$ components of $\bh'_{i,j}$ are all zero for
$(i,j)\in [\ell]\times [r]$.
We shorten the vector $\bh'_{i,j}$ by puncturing its first $\ell$ coordinates. Denote by $\widetilde{\bh}_{i,j}$ the shortened vectors.
It is clear that any $\lfloor\frac{d-1}{2}\rfloor$ of $\widetilde{\bh}_{1,1},\ldots,\widetilde{\bh}_{\ell,r}$ are still linearly independent.
Let $H_2$ be the matrix whose columns consists of
$\widetilde{\bh}_{i,j}$ for $i=1,\ldots,\ell$ and $j=1,\ldots,r$ and let $C_2$ be a linear code whose parity-check matrix
is $H_2$. Then $C_2$ is a linear code with length $N:=n-\ell=\frac{rn}{r+1}$, dimension at least $N-h$ and distance at least $\lfloor\frac{d-1}{2}\rfloor+1$. We now apply the Hamming bound to  $C_2=\left[n-\ell, \ge n-\ell-h,\ge \lfloor\frac{d-1}{2}\rfloor+1\right]$-linear code.

Let $d=4d_1+a$ for some $d_1\ge 1$ and $1\le a\le 4$.

{\it Case 1.} $a=1$ or $2$. In this case, we have $\lfloor\frac{d-1}{2}\rfloor+1=2d_1+1$.
Applying  the Hamming bound to $C_2$ gives
$$
q^{N-h}\leq \frac{q^N}{\sum_{i=1}^{d_1}\binom{N}{i}(q-1)^{i}}\leq  \frac{q^N}{\binom{N}{d_1}(q-1)^{d_1}}\leq \frac{q^N}{(\frac{N}{d_1})^{d_1}(q-1)^{d_1}},
$$
i.e.,  $\frac{rn}{r+1}=N\leq \frac{d_1}{q-1}\times q^{\frac{h}{d_1}}=\frac{d-a}{4(q-1)}\times q^{\frac{4h}{d-a}}\le \frac{d-a}{4(q-1)}\times q^{\frac{4(d-2)}{d-a}}$. The last inequality follows from the fact that $h\leq d-2$.

{\it Case 2.} $a=3$ or $4$. In this case, we have $\lfloor\frac{d-1}{2}\rfloor+1=2d_1+2$. Deleting  the first coordinate of $C_2$ gives a $q$-ary $[N-1,N-h,\ge 2d_1+1]$-linear code.
Applying  the Hamming bound to $[N-1,N-h,\ge 2d_1+1]$ gives
$$
q^{N-h}\leq \frac{q^{N-1}}{\sum_{i=1}^{d_1}\binom{N-1}{i}(q-1)^{i}}\leq  \frac{q^{N-1}}{\binom{N-1}{d_1}(q-1)^{d_1}}\leq \frac{q^{N-1}}{(\frac{N-1}{d_1})^{d_1}(q-1)^{d_1}},
$$
i.e.,  $\frac{rn}{r+1}-1=N-1\leq \frac{d_1}{q-1}\times q^{\frac{h-1}{d_1}}=\frac{d-a}{4(q-1)}\times q^{\frac{4(h-1)}{d-a}}\le \frac{d-a}{4(q-1)}\times q^{\frac{4(d-3)}{d-a}}$.
In conclusion, we have
\[n\le\left\{ \begin{array}{ll}\frac{r+1}r\times \frac{d-a}{4(q-1)}\times q^{\frac{4(d-2)}{d-a}} &\mbox{if $a=1,2$},\\
\frac{r+1}r\left(\frac{d-a}{4(q-1)}\times q^{\frac{4(d-3)}{d-a}}+1\right)&\mbox{if $a=3,4$.}
\end{array}\right.\]
The desired result follows.
\end{proof}
\begin{rmk}\label{rmk:nondiv}
{\rm
Let us extend this result to the case $n$ is not divisible by $r+1$.
From Remark \ref{rmk:divisible}, we obtain $\lceil \frac{n}{r+1} \rceil$ recovery sets $R_1,\ldots,R_{\lceil \frac{n}{r+1} \rceil}$
covering all of the $n$ indices.
There are at most $(r+1)\lceil \frac{n}{r+1} \rceil-n \le r$ indices that belong to more than $1$ of these $\lceil \frac{n}{r+1} \rceil$ recovery sets.
We first build the parity-check matrix $H$ whose first $\lceil \frac{n}{r+1} \rceil$ rows are $\bc_1,\ldots,\bc_{\lceil \frac{n}{r+1} \rceil}$ where $\bc_i$ corresponds to recovery set $R_i$. Then, we remove the columns from $H$ whose indices belong to multiple recovery sets. After removing at most $r$ columns, we apply the same argument to the resulting matrix. It is thus clear that the same result also holds for the case $n$ is not divisible by $r+1$, with a small adjustment of $r$ in the final upper bound on the code length.
}\end{rmk}

\begin{rmk}{\rm
From our proof of Theorem \ref{thm:upperbound}, one might see why our argument is not applicable to the optimal LRC with distance less than $5$. In our argument, the optimal LRC of distance $d$ is reduced to a code of distance at least $\lfloor\frac{d+1}{2}\rfloor$ without locality. If $d\leq 4$, this reduced code might be the Hamming code whose code length is independent of the alphabet size. That explains the reason why our argument fails in this scenario. On the other hand, there indeed exists unbounded length of optimal LRCs of distance $d\leq 4$~\cite{LXY}. Therefore, our argument reveals the inherent differences of optimal LRCs with distance less than $5$ and above.}
\end{rmk}

Note that Theorem \ref{thm:upperbound} says nothing when $d$ is proportional to $n$. To obtain a meaningful upper bound in this case, we resort to Theorem $1$ in \cite{CM13}.

\begin{theorem}
The minimal distance of optimal LRCs is upper bounded by $q\frac{r^2+2r+3}{r}$.
\end{theorem}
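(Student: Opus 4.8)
The plan is to derive the bound by combining the shortening (``puncturing'') argument underlying Theorem~1 of \cite{CM13} with the exact Singleton-type identity of Lemma~\ref{lem:2.1} and the $q$-ary Plotkin bound. Recall how the Cadambe--Mazumdar argument goes: if $C$ has locality $r$, then shortening $C$ on the $r+1$ coordinates of one recovery set removes $r+1$ coordinates, decreases the dimension by at most $r$ (one local parity relation is consumed), and cannot decrease the minimum distance. Iterating this $t$ times over essentially disjoint recovery sets produces a code of length $n-t(r+1)$, dimension $\ge k-tr$, and distance $\ge d$; equivalently, for every $t\ge 0$,
\[
k\;\le\; tr + k_q^{\mathrm{opt}}\big(n-t(r+1),\,d\big),
\]
where $k_q^{\mathrm{opt}}(N,d)$ denotes the largest $\log_q|C'|$ over codes $C'$ of length $N$ over $\F_q$ with minimum distance $\ge d$.

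Into this I would feed the structure of an optimal LRC. By Lemma~\ref{lem:2.1} (or its variant in Remark~\ref{rmk:divisible} when $(r+1)\nmid n$), $k=\frac{rn}{r+1}-h$ with $h=d-2-\big\lfloor\tfrac{d-2}{r+1}\big\rfloor$, and the decisive input is the sharp estimate $h\le\frac{r(d-1)}{r+1}$ (valid because the fractional part of $\tfrac{d-2}{r+1}$ is at most $\tfrac{r}{r+1}$); the crude bound $h\le d-2$ is useless here. Writing $N=n-t(r+1)$ and using $tr=\frac{r}{r+1}(n-N)$, the residual code $C_t$ obtained after $t$ shortenings satisfies
\[
\dim C_t \;\ge\; k-tr \;=\; \frac{rN}{r+1}-h \;\ge\; \frac{r(N-d+1)}{r+1},
\]
and dually $N\le (d-1)+\tfrac{r+1}{r}\dim C_t$. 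I would now choose the shortening depth $t$ so that $N=n-t(r+1)$ is just small enough that $N<\frac{qd}{q-1}$, i.e.\ $C_t$ has relative distance $\tfrac{d}{N}>1-\tfrac1q$, while still being large enough that the previous display forces $\dim C_t$ above some fixed small integer. Plotkin's bound applied to $C_t$ then gives $q^{\dim C_t}\le\frac{d}{d-(1-1/q)N}$; combining this upper bound on $\dim C_t$ with the lower bound from the display and with $N\le (d-1)+\tfrac{r+1}{r}\dim C_t$, all the $n$- and $t$-dependence cancels and one is left with an inequality of the shape $d\cdot c(q,r)\le c'(q,r)$ with $c(q,r)>0$. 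Solving it, and optimizing over the (constant) target value of $\dim C_t$, produces the claimed bound $d\le q\cdot\frac{r^2+2r+3}{r}$. When $d>(1-\tfrac1q)n$ one skips the shortening and applies Plotkin to $C$ itself, which is an easier sub-case; and the one-dimensional ``codes'' (e.g.\ a full-weight repetition code) should be excluded, as the statement fails for them verbatim.

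The main obstacle is precisely the choice of $t$: the admissible residual lengths $n-t(r+1)$ lie on an arithmetic progression of step $r+1$, so one cannot hit an arbitrary target, and one must balance two competing demands --- $N$ large enough that the optimal-LRC rate pushes $\dim C_t$ strictly above a constant, yet $N$ small enough (well below $\frac{qd}{q-1}$) that Plotkin caps $\dim C_t$ at that same constant. The regime in which these two are simultaneously satisfiable is exactly $d=O(qr)$, and carefully bookkeeping the floors in Lemma~\ref{lem:2.1}, the step-$(r+1)$ rounding of $N$, and the floor in Plotkin's bound is what pins down the constant as $\frac{r^2+2r+3}{r}$.
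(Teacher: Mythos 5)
Your proposal is correct and follows essentially the same route as the paper: Cadambe--Mazumdar's bound $k \le tr + k_{\max}(n-t(r+1),d)$, the Singleton-type lower bound on $k$ coming from Lemma~\ref{lem:2.1}, and Plotkin applied to the length-$N$ residual code with $t$ chosen to push $N$ just below $\tfrac{qd}{q-1}$. The paper instantiates this with $\epsilon=1/q^2$ (so $k_{\max}\le 2$) and the coarse relaxation $\lfloor\tfrac{d-2}{r+1}\rfloor\ge\tfrac{d-2}{r+1}-1$, landing on $q\,\tfrac{r^2+2r+3}{r}$; your sharper floor estimate $h\le\tfrac{r(d-1)}{r+1}$ would in fact yield the marginally stronger constant $q\,\tfrac{r^2+2r+2}{r}$.
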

\begin{proof}
By Theorem $1$ in \cite{CM13}, the dimension $k$, locality $r$ and minimal distance $d$ of optimal LRCs must obey
\begin{equation}\label{eq:dimension}
k \le tr + k_{max}(n-t(r+1), d)
\end{equation}
where $k_{\max}(m,e)$ is the largest dimension of a linear code in $\F_q^m$ of distance $e$, and $t$ is an arbitrary integer parameter, $0 \le t \le n/(r+1)$.

Pick $t$ to be $\lceil\frac{n-(1-\epsilon)\frac{qd}{q-1}}{r+1}\rceil$ so that $n-t(r+1) \le (1-\epsilon) \frac{qd}{q-1}$.
The Plotkin bound now gives $k_{\max}(n-t(r+1),d) \le \log_q (1/\epsilon)$.
Set $\epsilon=1/q^2$ and \eqref{eq:dimension} gives that $k\leq tr+2$.
On the other hand, the Singleton-type bound says that
$$
k\ge n-\frac{n}{r+1}-d+2+\lfloor\frac{d-2}{r+1}\rfloor\ge n-\frac{n}{r+1}-d+2+\frac{d-2}{r+1}-1.
$$
This implies that
$$
r\left(\frac{n-(1-\frac{1}{q^2})\frac{qd}{q-1}}{r+1}+1\right)\ge tr\ge n-\frac{n}{r+1}-d+\frac{d-2}{r+1}-1.
$$
Solving this inequality in $d$ gives us
$$
d\leq q\left( \frac{r^2+2r+3}{r} \right) \ .
\qedhere
$$
\end{proof}
The following Corollary is an immediate consequence.
\begin{cor}\label{cor:proportional}
Assume that $d=O(n)$ and $r$ is a constant, then the length $n$ of optimal LRCs is upper bounded by $O(q)$.
\end{cor}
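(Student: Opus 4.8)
The plan is to obtain Corollary~\ref{cor:proportional} directly from the theorem immediately preceding it, which asserts that every optimal LRC of locality $r$ has distance $d\le q(r^2+2r+3)/r$. The first --- and essentially only substantive --- step is to read the hypothesis ``$d=O(n)$'' correctly: since $d\le n$ holds for \emph{every} code (and since, for instance, the unbounded-length optimal LRCs with $d=3,4$ over a fixed alphabet, e.g.\ from \cite{LXY}, already show the conclusion would be false under the literal reading $d\le Cn$), the hypothesis must mean that $d$ is proportional to $n$, i.e.\ that there is an absolute constant $c>0$ with $n\le c\,d$, equivalently $d=\Omega(n)$.

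With this reading in hand the deduction is two lines. Since $r$ is held constant, the factor $(r^2+2r+3)/r$ is $O(1)$, so the preceding theorem gives $d\le O(q)$, with the implied constant depending only on $r$. Combining, $n\le c\,d\le O(q)$, which is the claim; all implied constants depend only on $r$ and on the proportionality constant hidden in the hypothesis, and in particular not on $q$ or on $n$.

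I expect no real obstacle here: the corollary is a one-step consequence, and the only point needing care is the interpretation of the asymptotic notation together with the bookkeeping that the final bound is independent of $q$. For context I would also note that one cannot route through Theorem~\ref{thm:upperbound} in this regime: when $d=\Theta(n)$ its bound $n\le O(d\,q^{3})$ degenerates into the vacuous inequality $n\le O(n\,q^{3})$, which is exactly why the theorem based on \cite{CM13} is needed to cover the case $d=O(n)$.
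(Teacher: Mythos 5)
Your deduction is correct and is precisely what the paper intends by calling the corollary an ``immediate consequence'' of the preceding theorem: read the hypothesis as $d=\Theta(n)$ (as the surrounding prose, ``when $d$ is proportional to $n$,'' makes clear), then $n\le cd\le c\,q(r^2+2r+3)/r=O(q)$ since $r$ is constant. Your side remark that the literal reading ``$d\le Cn$'' would render the corollary false (by the unbounded-length $d=3,4$ constructions) is a fair observation about the paper's loose use of $O(\cdot)$, but it does not change the substance: the argument matches the paper's.
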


\section{Construction of LRCs of super-linear length}

To the best of our knowledge, all known constructions of optimal LRCs have block length $n\le O(q)$ unless $d\leq 4$.
Our upper bound in the preceding section implies that $n$ must be upper bounded by (roughly) $q^3$.
A natural question arises whether
there exists optimal LRC with super linear length in $q$, e.g, $n=\Omega(q^{1+\epsilon})$ and some constant $d>4$.
In this section we answer this question affirmatively, showing such codes for all $d \le r+2$.

When $d=r+2$ and $r+1|n$, the  Singleton-type bound \eqref{eq:x1} can't be met~\cite[Corollary 10]{GHSY12}. In this case, by an optimal LRC we mean a code attaining the trade-off
$d=n-k-\lceil \frac{k}{r} \rceil +1$.
When $n$ is not divisible by $r+1$, by shortening the code, it is still possible to obtain the optimal LRCs.
We leave this discussion to Corollary \ref{cor:shorten} and \ref{cor:opt}.

Before stating our main results, we notice a simple but useful fact, i.e., the locality and minimum distance of a linear code can be reflected by representing its generator matrix properly.
Thus, it is sufficient to concentrate on the construction of generator matrix.
As a warmup, let us begin with the generator matrix of optimal LRCs with minimum distance $3$ and $4$.

\begin{theorem}
Assume that $d=3,4$, $d-2\leq r$ and $r+1|n$, there exist optimal LRCs of arbitrarily lengths as long as $q\geq r+1$.
\end{theorem}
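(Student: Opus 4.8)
The plan is to build the generator matrix directly, exploiting the fact that for $d=3,4$ we need a code whose dual distance is exactly $d$ but whose locality is only $r \geq d-2$. Since $r+1 \mid n$, write $n = m(r+1)$ and split the coordinate set $[n]$ into $m$ disjoint blocks $B_1,\dots,B_m$ of size $r+1$. The key observation is that locality $r$ will be automatic if within each block $B_i$ the columns of the generator matrix sum to zero (equivalently, the all-ones vector supported on $B_i$ lies in $C^\perp$, using Lemma~\ref{dual_distance}); this gives each symbol a recovery set of size $r+1$ namely its own block. So the whole task reduces to choosing, for each block, $r+1$ column vectors summing to zero, in such a way that the global minimum distance is $d$.

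For the distance requirement I would use the standard parity-check viewpoint: $C$ has distance $\geq d$ iff every $d-1$ columns of a parity-check matrix $H$ are linearly independent. For $d=3$ I need every $2$ columns of $H$ distinct and nonzero; for $d=4$, every $3$ columns linearly independent. The concrete construction I would propose: take $H$ of the shape in \eqref{eq:paritycheck}, i.e. $\ell = m$ ``local'' rows forming the block-diagonal indicator pattern (the $i$-th row is ${\bf 1}$ on $B_i$ and $0$ elsewhere), stacked on top of a small matrix $A$ with $h = d-2-\lfloor (d-2)/(r+1)\rfloor$ rows (so $h=1$ for $d=3$, $h=2$ for $d=4$, since $d-2 \leq r < r+1$). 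Within block $B_i$ choose the $A$-columns so that the first $r$ of them are arbitrary distinct nonzero vectors in $\F_q^{h}$ (or in general position for $d=4$) and the last one is minus their sum, so the block sums to zero in the $A$-part as well as in the indicator part. Because $q \geq r+1$, there is enough room in $\F_q^{h}$ (at least $q-1 \geq r$ nonzero vectors for $h=1$; and for $h=2$ one can pick $r$ vectors in general position, e.g.\ on a conic or a line's worth of distinct slopes, again using $q \geq r+1$) to do this inside a single block, and crucially the same $r+1$ columns can be reused verbatim in every block since two columns in different blocks already differ in the top $m$ coordinates. I would then verify: (i) any two columns of $H$ are distinct (different blocks: differ on top part; same block: differ on $A$-part by construction) and nonzero (the indicator coordinate is $1$) — this gives $d \geq 3$; (ii) for $d=4$, any three columns are independent — if two lie in the same block and one outside, independence follows from the outside column having a $1$ where the block columns have $0$, reducing to independence of two columns in one block which holds by the general-position choice; if all three lie in distinct blocks they are independent because their top-part supports are disjoint singletons; if all three lie in one block, use general position. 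Finally one checks $d$ is not larger than claimed by exhibiting a weight-$d$ codeword (e.g.\ a dependency among $d$ suitable columns), and checks that the dimension $k = n - (m + h)$ together with $d$ meets \eqref{eq:1} with equality via Lemma~\ref{lem:2.1}.

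The main obstacle I anticipate is the $d=4$, $h=2$ case: I must choose $r$ vectors in $\F_q^{2}$ (the $A$-columns of a block, before appending their negated sum) such that not only are they pairwise independent but also no three columns of the final $H_1$-type block submatrix are dependent once the ``$-\text{sum}$'' column is thrown in, and this must be compatible with reusing the block across all $m$ copies. Choosing the vectors as $\binom{1}{\alpha}$ for $r$ distinct $\alpha \in \F_q$ (possible since $q \geq r+1$) should work: any two are independent, and one checks the combined $(r+1)$-column block has the property that every $3$ of its columns are independent — this is essentially the statement that the columns, viewed projectively, are in general position, which is where the conic/Vandermonde structure does the work. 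If a clean general-position choice resists a short argument, the fallback is to increase $q$ slightly or to argue by a greedy/counting bound as is done later in Section~4; but for $d=3,4$ the direct choice should suffice, so I would present the explicit construction. The remaining steps — translating ``blocks sum to zero'' into locality $r$ via Lemma~\ref{dual_distance}, and confirming optimality via Lemma~\ref{lem:2.1} — are routine bookkeeping.
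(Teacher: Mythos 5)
Your high-level plan matches the paper's proof: build the parity-check matrix $H$ in the shape of \eqref{eq:paritycheck}, with $\ell=\frac{n}{r+1}$ block-diagonal indicator rows on top (these rows are codewords of $C^\perp$ of weight $r+1$ and give locality $r$ via Lemma~\ref{dual_distance}) and an $h\times n$ matrix below, with $h=d-2$ here; then argue that any $d-1$ columns of $H$ are linearly independent by treating ``columns in the same block'' via a Vandermonde-type choice and ``columns across blocks'' via disjoint indicator supports. That is exactly what the paper does, taking the $A$-block inside each $B_i$ to be the bottom $h$ rows of a fixed $(d-1)\times(r+1)$ Vandermonde matrix $A_1$, the same $A$ repeated in every block; the optimality check via $d-2\le r$ and Lemma~\ref{lem:2.1} is the same as your ``routine bookkeeping.''

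Where your proposal has a genuine gap is the extra constraint that the last column of each block in the $A$-part be \emph{minus the sum} of the first $r$. This constraint is both unnecessary and potentially harmful. It is unnecessary because, in the parity-check formulation, the indicator rows are already rows of $H$, hence already in $C^\perp$; you do not need the columns of $H$ to sum to zero within a block, and in fact the indicator part of a block sums to $(r+1)e_i$, not zero, so the parenthetical remark ``as well as in the indicator part'' is false. It is potentially harmful because, with ``arbitrary distinct'' choices for the first $r$ scalars (case $d=3$, $h=1$), the minus-sum may coincide with one of them: e.g.\ over $\F_5$ with $r=2$, taking $a_1=1$, $a_2=2$ forces $a_3=-3=2=a_2$, breaking the needed distinctness. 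So ``arbitrary'' is wrong, and $q\geq r+1$ does not by itself rescue a \emph{specific} minus-sum choice. The same concern carries to $d=4$, $h=2$: after fixing $r$ columns $\binom{1}{\alpha}$ with distinct $\alpha$, the minus-sum column need not be of Vandermonde form, so ``every $3$ are independent'' does not follow from the conic/Vandermonde structure as you claim, and you have not verified it. The fix is to drop the minus-sum constraint entirely and take all $r+1$ columns from a $(d-1)\times(r+1)$ Vandermonde matrix over $\F_q$ (possible precisely because $q\geq r+1$), which is what the paper does; then any $d-1$ columns within a block are independent for free, and the rest of your argument goes through unchanged.
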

\begin{proof}
For $d=3,4$, $d-2\leq r$ and $r+1|n$, the Singleton-type bound implies that $n-k=\frac{n}{r+1}+d-2$.
Since $q\geq r+1$, we let $A$ be a $(d-2)\times (r+1)$ Vandermonde matrix over $\F_q$ such that
$$
A_1=\left(
  \begin{array}{c}
    \mathbf{1} \\
    A \\
  \end{array}
\right)
$$
is a $(d-1)\times (r+1)$ Vandermonde matrix.
Define $(\frac{n}{r+1}+d-2)\times n$ matrix
\begin{equation*}
H=\left(
    \begin{array}{c|c|c|c}
      {\bf 1} & \mathbf{0}   & \cdots &\mathbf{0} \\
      \mathbf{0} & {\bf 1} &  \cdots & \mathbf{0} \\
      \vdots & \vdots  & \ddots &\vdots \\
      \mathbf{0} & \mathbf{0}   & \cdots & {\bf 1}\\
      A & A &  \cdots & A
    \end{array}
\right)
\end{equation*}
Where ${\bf 1}$ and $\mathbf{0}$ are all-$1$ and all-$0$ vectors in $\F_q^{r+1}$, respectively.
We partition the columns of $H$ into $\frac{n}{r+1}$ blocks $B_1,\ldots,B_{\frac{n}{r+1}}$ such that
$\bh \in B_i$ if its $i$-th component is non-zero. From the expression of matrix $H$, it is clear that
each column belongs to exactly one block and the columns in distinct blocks are linearly independent.
Moreover, any $d-1$ columns in the same block are linearly independent due to the property of Vandermonde matrix $A_1$.
Next we show that any $d-1$ columns of $H$ are linearly independent.
It suffices to verify this claim for the case $d=4$.
To see this, we pick any three columns $\bh_i,\bh_j,\bh_t$ from $H$.
Nothing needs to prove if these three columns belong to the same block.
We assume that they belong to at least two blocks.
Without loss of generality, $\bh_t$ is in a block that does not contain $\bh_i$ and $\bh_j$.
From above observation, we see that $\bh_t$ is linearly independent from $\bh_i$ and $\bh_j$. It is clear $\bh_i$ and $\bh_j$ are linearly
independent no matter whether they belong to the same block or different blocks.
Thus, any $3$ columns of $H$ are linearly independent.
Let $C$ be the linear code whose parity-check matrix is $H$. It is clear that $C$ has length $n$, dimension $k(C)\geq n-\frac{n}{r+1}-(d-2)=\frac{rn}{(r+1)}-(d-2)$, distance $d(C)\geq d$ and locality $r$.
The condition $d-2\leq r$ leads to $\lceil\frac{k(C)}{r}\rceil\geq \frac{n}{r+1}$ and thus $k(C)+\lceil\frac{k(C)}{r}\rceil\geq n-(d-2)$.
The desired result follows since $d(C)\geq d\geq n-k(C)-\lceil\frac{k(C)}{r}\rceil+2$.
\end{proof}

Next, we proceed to our main result of this section, the construction of optimal LRCs of super-linear length for $d\leq r+2$. Like the case
$d=3,4$, it is sufficient to construct a generator matrix of these codes.

\begin{theorem}\label{thm:construction}
Assume $d\leq r+2$ and $(r+1)|n$. There exist optimal LRCs of length $n=\Omega_{d,r}(q^{1+\frac{1}{\lfloor (d-3)/2\rfloor}})$. In particular, one obtains the best possible length $n=O(q^2)$ for optimal LRC of minimum distance $5$ if $r\ge 3$ and $(r+1)|n$.
\end{theorem}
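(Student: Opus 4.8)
The plan is to build the parity-check matrix $H$ of the desired code directly, in the block form of \eqref{eq:paritycheck}: the top $\ell:=n/(r+1)$ rows form a block-diagonal matrix whose $i$-th block is the all-ones vector $\mathbf 1\in\F_q^{r+1}$, and below sits an $h\times n$ matrix $A$ with $h=d-2$ (since $d\le r+2$ forces $\lfloor(d-2)/(r+1)\rfloor=0$, Lemma~\ref{lem:2.1} says an optimal LRC must have $n-k=\ell+d-2$). Write the columns of $H$ as $\bh_{i,j}=(\be_i,\bv_{i,j})^T$ with $\bv_{i,j}\in\F_q^{d-2}$, $i\in[\ell]$, $j\in[r+1]$. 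The top rows give locality $r$ for free (each block's all-ones row is a weight-$(r+1)$ codeword of $C^\perp$ whose support covers that block, so Lemma~\ref{dual_distance} applies), and a short computation as in the proof of Lemma~\ref{lem:2.1} shows that, provided $\mathrm{rank}(H)=\ell+d-2$ and $d(C)\ge d$, the code meets \eqref{eq:x1} with equality when $d\le r+1$ and meets the modified trade-off $d=n-k-\lceil k/r\rceil+1$ when $d=r+2$. Thus the whole task reduces to choosing the $\bv_{i,j}$'s so that $d(C)\ge d$; the rank condition will come along for free.

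Next I would translate ``$d(C)\ge d$'' into a combinatorial condition on the $\bv_{i,j}$'s. A codeword of weight $\le d-1\le r+1$ must have at least two coordinates in every block it meets (a lone coordinate is forced to $0$ by the local parity), so a weight-$w$ codeword ($w\le d-1$) meeting blocks $i_1,\dots,i_b$ in $m_l\ge 2$ coordinates each, after eliminating one coordinate per block using the local parities, yields a linear dependence among the $\sum_l(m_l-1)$ \emph{difference vectors} $\bv_{i_l,j}-\bv_{i_l,j_l^\ast}$ (here $j$ runs over the chosen coordinates of block $i_l$ other than a fixed reference $j_l^\ast$, and not all coefficients vanish). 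Hence $d(C)\ge d$ is equivalent to: (a) for every $i$, any $d-1$ of the columns $(1,\bv_{i,j})\in\F_q^{d-1}$ are linearly independent, i.e.\ each block is an $[r+1,d-1]$-MDS configuration (this handles $b=1$, and forces $\mathrm{rank}(H)=\ell+d-2$); and (b) for every $b\ge 2$ blocks and any choice of $m_l\ge 2$ coordinates in each with $\sum_l m_l\le d-1$, the $\sum_l(m_l-1)\le d-3$ difference vectors are linearly independent. Two features of (b) drive the construction: $\sum_l m_l\le d-1$ and $m_l\ge 2$ force $b\le t:=\lfloor(d-1)/2\rfloor$, and if a would-be violation uses $b'=b-1$ \emph{old} blocks, it involves at most $d-2-b'$ difference vectors in all.

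I would then fill in $H$ greedily --- block by block, and within a block coordinate by coordinate --- maintaining (a) and (b). Suppose blocks $1,\dots,i-1$ and coordinates $\bv_{i,1},\dots,\bv_{i,s-1}$ of block $i$ have been chosen consistently. The choices of $\bv_{i,s}$ that violate (a) lie in the union of the affine hulls of the $(d-2)$-subsets of the previously placed coordinates of block $i$, hence in $O_{d,r}(1)$ affine subspaces of dimension $\le d-3$, i.e.\ at most $O_{d,r}(q^{d-3})$ points. The choices violating (b) are grouped by the number $b'\in\{1,\dots,t-1\}$ of old blocks in the violating configuration: after fixing those $b'$ old blocks ($\binom{i-1}{b'}=O_{d,r}((i-1)^{b'})$ ways) and all remaining combinatorial data ($O_{d,r}(1)$ ways), the difference vectors already determined are --- by the maintained invariant --- linearly independent and span a space of dimension $\le d-3-b'$, and the one remaining difference vector, of the form $\bv_{i,s}-\bv_{i,j^\ast}$ with $\bv_{i,j^\ast}$ already chosen, is forced to lie in that span; so these bad $\bv_{i,s}$ form $O_{d,r}((i-1)^{b'})$ cosets of dimension $\le d-3-b'$, i.e.\ at most $O_{d,r}((i-1)^{b'}q^{d-3-b'})$ points. (The degenerate case in which the reference coordinate of block $i$ is $s$ itself pins $\bv_{i,s}$ to $O_{d,r}(1)$ individual points, which is negligible.) Altogether the number of forbidden $\bv_{i,s}$ is at most $O_{d,r}\big(q^{d-3}+\sum_{b'=1}^{t-1}(i-1)^{b'}q^{d-3-b'}\big)$, which stays below $q^{d-2}$ as long as $(i-1)^{t-1}q^{d-3-(t-1)}\ll q^{d-2}$, i.e.\ $i-1\ll q^{t/(t-1)}=q^{1+1/(t-1)}$ (the $b'=t-1$ term being the binding one). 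So one can complete $\ell=\Theta_{d,r}(q^{1+1/(t-1)})$ full blocks, and since $t-1=\lfloor(d-3)/2\rfloor$ the resulting optimal LRC has length $n=(r+1)\ell=\Omega_{d,r}(q^{1+1/\lfloor(d-3)/2\rfloor})$. For $d=5$ we have $t=2$, $t-1=1$, and $d\le r+2$ means $r\ge 3$, giving $n=\Omega_{d,r}(q^2)$, which together with Theorem~\ref{thm:upperbound} pins down the best length as $\Theta(q^2)$.

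The step I expect to be the crux is the second paragraph: correctly identifying which dependencies among the $\bv_{i,j}$'s are equivalent to $d(C)\ge d$, and in particular the observation that a violation involving $b'$ old blocks constrains $\bv_{i,s}$ to a coset of dimension only $d-3-b'$ rather than $d-3$. This is exactly the mechanism that makes the product $(\text{number of configurations})\times(\text{size of each bad locus})\approx\ell^{b'}q^{d-3-b'}$ balance to the stated exponent, and it is the ``dual'' of the reduction used to prove the upper bound in Theorem~\ref{thm:upperbound}. Secondary technical points --- verifying that the independence invariant is genuinely preserved at each greedy step (using the inductive hypothesis for configurations spanning $\ge 2$ blocks and the MDS property within a single block) and handling the degenerate reference case --- are routine, as is the existence of the $[r+1,d-1]$-MDS blocks since $q$ is large.
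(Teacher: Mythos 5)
Your proposal is correct and follows essentially the same route as the paper: both build the parity-check matrix greedily column by column in the block form \eqref{eq:paritycheck}, and both bound the number of ``bad'' choices by summing over the number $t$ (your $b'+1$) of blocks in a potential violating configuration, with the product of (number of such configurations) $\times$ (dimension of the forbidden coset) balancing at $t=\lfloor(d-1)/2\rfloor$ to give the exponent $1+1/\lfloor(d-3)/2\rfloor$. The main cosmetic difference is that you normalize each column's top nonzero entry to $1$ and reason directly about the difference vectors $\bv_{i,j}-\bv_{i,j^\ast}$ (ambient space $q^{d-2}$, forbidden locus $q^{d-3-b'}$), whereas the paper keeps the scalar free (ambient $\approx q^{d-1}$, forbidden $q^{d-1-t}$); the ratios agree, and both arguments also handle the optimality check (including the $d=r+2$ case) in the same way.
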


\begin{proof}
Let $n=\eta q^{1+1/\lfloor (d-3)/2 \rfloor}$ with some constant $\eta$ that only depends on $d$ and $r$, i.e., $\eta=\Omega_{d,r}(1)$. We will determine $\eta$ later.
It suffices to construct a matrix $H$ and show that the code $C$ derived from this parity-check matrix is an optimal
LRC. Label and order the $n$ coordinates with
$(i,j)\in [\frac{n}{r+1}]\times[r+1]$, i.e., $(i_1,j_1)$ precedes $(i_2,j_2)$ if $i_1<i_2$ or $i_1=i_2$ and $j_1<j_2$.
Let $H=(\bh_{i,j})_{(i,j)\in [\frac{n}{r+1}]\times[r+1]}$ where $\bh_{i,j}\in \F_q^{n-k}$. That means $H$ consists of the columns
$\bh_{i,j}$ for $(i,j)\in [\frac{n}{r+1}]\times[r+1]$.
We start from $\bh_{1,1}$ and determine the value of $\bh_{i,j}$ column by column in the above order.
In each step, we make sure that the new column $\bh_{i,j}$ together with any $d-2$ columns preceding the $(i,j)$-th column are linearly independent.
Meanwhile, the matrix $H$ holds the same form\footnote{The same form is referred to that their distributions of non-zero entry in upper half matrix (matrix lying above $A$) are the same, i.e., entry of value $1$ and $0$ represents the nonzero entry and zero, respectively.} as the matrix in
\eqref{eq:paritycheck}. If we can achieve both of the conditions, we are done. Define $\frac{n}{r+1}$ blocks $B_1,\ldots,B_{\frac{n}{r+1}}$ such that $B_i=\{\bh_{i,1},\ldots,\bh_{i,r+1}\}$.
That means we partition the $n$ columns into $\frac{n}{r+1}$ disjoint blocks. Algorithm $1$ below gives the iterative method to compute the columns $\bh_{i,j}$'s.

\medskip
\begin{savenotes}
\begin{center}
\fbox{\begin{minipage}{35em}
\begin{center} {\bf Algorithm $1$} \end{center}
\begin{itemize}
\item For $i=1, \ldots, \frac{n}{r+1}$, and $j=1,\ldots,r+1$, do the following operation.
\begin{itemize}
\item Find $\bv \in \F_q^{n-k}$ of form \eqref{eq:columnvector}\footnote{Only the $i$-th component out of the first $\frac{n}{r+1}$ components is nonzero.} such that $\bv$ is linearly independent of any subset of at most $(d-2)$ columns $\bh_{i,j}$ chosen before this step.
\item Let $\bv$ be the $(i,j)$-th column of $H$, i.e., $\bh_{i,j}=\bv$.
\end{itemize}
\end{itemize}

\end{minipage}}
\end{center}
\end{savenotes}
\bigskip

We justify Algorithm $1$ by showing that there always exists such $\bh_{i,j}$ for any $(i,j)\in [\frac{n}{r+1}]\times[r+1]$.
Assume that we arrive at the $(a,b)$-th column.
If $b=1$, the construction is trivial. Let $\bh_{a,b}$ be a column vector such that
the first $\frac{n}{r+1}$ components except $i$-th component are zero. Obviously, it matches the form of Equation~\ref{eq:columnvector}. The linearly independence is also trivial since the $i$-th component of all the columns $\bh_{i,j}$ for $i<a$ is $0$.
Otherwise, to simplify our discussion, we assume that the first $d-2$ columns are already found.
Since any $d-2$ columns prior to the $(a,b)$-th column are already linearly independent by our algorithm, it suffices to show that
$\bh_{a,b}$ is linearly independent from these $d-2$ columns.
To achieve this, we need to check all possible combinations of
these $d-2$ columns.
Assume that these $d-2$ columns are chosen exactly from $t$ blocks. Obviously, block $B_a$ must be selected. Otherwise, the same reason for $b=1$ implies that $\bh_{a,b}$ is linearly independent of these $d-2$ columns.
Without loss of generality, we assume that these $t$ blocks are $B_1,\ldots,B_{t-1}$ and $B_a$ and there are $i_j$ columns picked from block $B_j$. Then, the submatrix $H_1$ consisting of these $d-2$ columns has the following form:
$$
H_1=\left(
  \begin{array}{c}
    \begin{array}{c|c|c|c|c}
      \bx_1 & \mathbf{0}  &\cdots & \mathbf{0}&\mathbf{0}\\
      \mathbf{0} & \bx_2 &    \cdots & \mathbf{0} &\mathbf{0}\\
      \vdots & \vdots &    \ddots &\vdots &\vdots\\
      \mathbf{0} & \mathbf{0}  & \cdots &  \bx_{t-1} &\mathbf{0}  \\
     \vdots & \vdots &    \ddots &\vdots &\vdots\\
       \mathbf{0} & \mathbf{0} &  \cdots & \mathbf{0} & \bx_{a}   \\
       \vdots & \vdots  & \ddots &\vdots  &\vdots \\
    \end{array}\\
      \hdashline
    \huge{A_1}
  \end{array}
\right)
$$
where $\bx_i\in \F_q^{i_j}$ and $A_1$ is a $(d-2)\times (d-2)$ matrix.
If any $B_j$, $j=1,2,\dots,t-1$, contains only one column, then that column is linearly independent of the rest of the $d-3$ columns and $\bh_{a,b}$, and therefore can be removed from consideration. Thus we may assume that there are at least two columns chosen in each block except block $B_a$.
Thus, $t$ is at most ${\lfloor\frac{d-1}{2}\rfloor}$. Recall that our goal is to ensure that $\bh_{a,b}$ is linearly independent of these at most $d-2$ columns in total chosen from the blocks $B_1,\dots,B_{t-1}$.
Given the $t$ blocks and the $d-2$ columns chosen from them, we count the number of bad $\bh_{a,b}$ which are linear combinations of these $d-2$ columns. If the number of such linear combinations is smaller than the size of the whole space of possible choices of $\bh_{a,b}$, we are done.
To achieve this, we need to determine the maximal subspace $V$ spanned by these $d-2$ columns such that all the vectors in $V$ has the same form as the
vector $\bh_{a,b}$, i.e., the first $\frac{n}{r+1}$ components except $a$-th component are zero.

For block $B_j$ with $j\neq a$, by the expression of matrix $H_1$, the $i_j$ columns of $B_j$ created a $i_j-1$-dimensional subspace where the first $\frac{n}{r+1}$ components of all the vectors are $0$.
That means, block $B_j$ for $j\neq a$ contributes $i_j-1$ linearly independent vectors to the maximal subspace $V$.
For block $B_a$, it contributes at most $i_a$ linearly independent vectors to the maximal subspace $V$.
It follows that the dimension of $V$ is at most
$\sum_{i=1}^{t-1}(i_j-1)+i_a=d-1-t$. This implies that there are at most $q^{d-1-t}$ $\bh_{a,b}$s lying in the space spanned by these
$d-2$ columns.

It remains to count the number of distinct $d-2$ column sets.
Note that
$B_a$ is always selected. Thus, we only have at most $\binom{a-1}{t-1}\leq \binom{\frac{n}{r+1}}{t-1}\leq (\frac{n}{r+1})^{t-1}$ combinations of these $t$ blocks.
After fixing these $t$ blocks, there are at most $(t(r+1))^{d-2}$ ways to pick  $d-2$ columns from these $t$ blocks due to the fact that these $t$ blocks contain only $t(r+1)$ columns. In total, there are at most $(t(r+1))^{d-2}(\frac{n}{r+1})^{t-1}$ ways to
pick $d-2$ columns that precede the $(i,j)$-th column.
Each combination contributes to at most
$q^{d-1-t}$ bad $\bh_{a,b}$. Thus, the number of bad $\bh_{a,b}$ are upper bounded by
\begin{eqnarray*}
\sum_{t=1}^{\lfloor\frac{d-1}{2}\rfloor}\big(t(r+1)\big)^{d-2}\left(\frac{n}{r+1}\right)^{t-1}q^{d-1-t}
&\leq&\Big(\frac{q(d-1)(r+1)}{2}\Big)^{d-2}\sum_{t=1}^{\lfloor\frac{d-1}{2}\rfloor}\Big(\frac{n}{q(r+1)}\Big)^{t-1}\\
&\leq &\Big(\frac{q(d-1)(r+1)}{2}\Big)^{d-2} \Bigl( \frac{d-1}{2}\Bigr) \Bigl(\frac{n}{q(r+1)}\Bigr)^{\lfloor\frac{d-3}{2}\rfloor}.
\end{eqnarray*}
The first inequality is due to $t\leq \frac{d-1}{2}$ and
the last inequality is due to $n>q(r+1)$.
Plug $n=\eta q^{1+\frac{1}{\lfloor (d-3)/2 \rfloor}}$ into the formula. This number is upper bounded by $q^{d-1}(d-1)^{d-1}(r+1)^{(d-2)/2} \eta^{\lfloor (d-3)/2 \rfloor}$; by picking $\eta$ small enough as a function of $d,r$ we can ensure this quantity is at most $q^{d-1}/2$.

  On the other hand, according to Algorithm $1$, the $a$-th component of $\bh_{a,b}$ should be nonzero. Moreover, the first $\frac{n}{r+1}$ components except $a$-th component are all zero.
 That means, the whole space of $\bh_{a,b}$ is of size $q^{d-1}-q^{d-2}> \frac{1}{2}q^{d-1}$.
 Thus, there always exists $\bh_{a,b}$ satisfying our algorithm's requirement.

We are almost done. Let $C$ be the code whose parity-check matrix is $H$. It is clear that $C$ has locality $r$. Since any $d-1$ columns of $H$ are linearly independent, $C$ has minimum distance $d(C)$ at least $d$. Because $H$ has $\frac{n}{r+1}+d-2$ rows, the dimension of $C$ is $k(C)\geq n-\frac{n}{r+1}-(d-2)=\frac{rn}{r+1}-(d-2)$. This implies
$$
\frac{k(C)}{r}\geq \frac{n}{r+1}-\frac{d-2}{r}
$$
We divide it into two case.
\begin{itemize}
\item If $d-2<r$,
 the condition $r+1|n$ implies $\left\lceil\frac{k(C)}{r}\right\rceil\geq \frac{n}{r+1}$ and thus $k(C)+\left\lceil\frac{k(C)}{r}\right\rceil\geq n-d+2$.
It follows that
$$
d(C)\geq d\geq n-k(C)-\left\lceil\frac{k(C)}{r}\right\rceil+2.
$$
Thus, $C$ is an optimal LRC. We are done.
\item If $d-2=r$,
the condition $r+1|n$ implies $\left\lceil\frac{k(C)}{r}\right\rceil\geq \frac{n}{r+1}-1$ and thus $k(C)+\left\lceil\frac{k(C)}{r}\right\rceil\geq n-d+1$.
It follows that
$$
d(C)\geq d\geq n-k(C)-\left\lceil\frac{k(C)}{r}\right\rceil+1.
$$
$C$ is still an optimal LRC because there does not exist LRC reaching the Singleton-type bound. \qedhere
\end{itemize}
\end{proof}

Next, we extend this theorem to the case $n$ is not divisible by $r+1$ and $d\leq r+2$.
\begin{cor}\label{cor:shorten}
Assume $n \equiv a \pmod {r+1}$ and $a>d-1$. There exists optimal LRC of length $n=\Omega_{d,r}(q^{1+\frac{1}{\lfloor (d-3)/2\rfloor}})$. In particular, one obtains the best possible length $n=O(q^2)$ for optimal LRC of minimum distance $5$ if $n  \pmod {r+1}>4$.
\end{cor}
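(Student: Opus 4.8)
The plan is to obtain the desired code by \emph{shortening} the construction of Theorem~\ref{thm:construction}. Write $n=m(r+1)+a$; the hypothesis $a>d-1$ forces $a\ge d$, and $a\le r$ since $a$ is a nonzero residue modulo $r+1$ (the case $a=0$ being the divisible case already covered by Theorem~\ref{thm:construction}). In particular $d\le a\le r$, so $d-2<r$ and Theorem~\ref{thm:construction} is applicable. Put $s:=r+1-a$, so that $n':=n+s=(m+1)(r+1)$ is divisible by $r+1$, and let $C'$ be an optimal LRC of length $n'$ with locality $r$, distance $d$, and dimension $k'=\frac{rn'}{r+1}-(d-2)$, as furnished by Theorem~\ref{thm:construction}; its parity-check matrix has the block form \eqref{eq:paritycheck} with $m+1$ pairwise disjoint recovery blocks $B_1,\dots,B_{m+1}$ of size $r+1$. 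Now let $C$ be obtained from $C'$ by shortening at a set $T$ of $s$ coordinates all lying inside the last block $B_{m+1}$; then $C$ has length $n'-s=n$.

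Three properties pass to $C$. \textbf{Locality:} shortening is dual to puncturing, so $C^{\perp}=\mathrm{Punct}_T\bigl((C')^{\perp}\bigr)$; the block-indicator codewords of $(C')^{\perp}$ supported on $B_1,\dots,B_m$ survive unchanged, while the one supported on $B_{m+1}$ becomes a codeword of $C^{\perp}$ with support $B_{m+1}\setminus T$ of size $a\le r+1$, so by Lemma~\ref{dual_distance} every one of the $n$ coordinates of $C$ has a recovery set of size at most $r+1$, i.e.\ $C$ has locality $r$. \textbf{Distance:} shortening never decreases the minimum distance, hence $d(C)\ge d(C')=d$. \textbf{Dimension:} each single-coordinate shortening drops the dimension by at most $1$, so $k(C)\ge k'-s=:k_0$.

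It remains to invoke the Singleton-type bound \eqref{eq:1}, which is legitimate since $C$ has locality $r$. A direct computation gives $n-k_0=\frac{n'}{r+1}+d-2=m+d-1$, and $k_0=rm+(a-d+1)$ with $1\le a-d+1\le r-d+1<r$, so $\lceil k_0/r\rceil=m+1$; hence $n-k_0-\lceil k_0/r\rceil+2=d$. As $x\mapsto x+\lceil x/r\rceil$ is nondecreasing and $k(C)\ge k_0$, the bound gives $d(C)\le n-k(C)-\lceil k(C)/r\rceil+2\le n-k_0-\lceil k_0/r\rceil+2=d$, and combined with $d(C)\ge d$ this forces all these quantities to coincide, so $d(C)=n-k(C)-\lceil k(C)/r\rceil+2$ and $C$ is an optimal LRC, of length $n=n'-s\ge n'-r$. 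Taking $n'$ as large as Theorem~\ref{thm:construction} permits yields $n=\Omega_{d,r}(q^{1+1/\lfloor(d-3)/2\rfloor})$; for $d=5$ this is $\Omega(q^2)$, and the matching upper bound $n=O(q^2)$ follows from Theorem~\ref{thm:upperbound} together with Remark~\ref{rmk:nondiv} (which handles $n$ not divisible by $r+1$), giving $n=\Theta(q^2)$. I do not expect a genuine obstacle here: the argument is a clean shortening reduction, and the only delicate point is recognizing that the hypothesis $a>d-1$ is used precisely to ensure $a-d+1\ge1$, i.e.\ $\lceil k_0/r\rceil=m+1$, so that the shortened code sits exactly on the Singleton line rather than strictly below it (for $a=d-1$ the bound would only give $d(C)\le d+1$ and optimality could fail).
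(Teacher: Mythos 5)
Your proof is correct and follows essentially the same route as the paper: extend $n$ to the next multiple $n'$ of $r+1$, instantiate the construction of Theorem~\ref{thm:construction} there, and shorten away $r+1-a$ coordinates from one block (the paper phrases this as deleting the corresponding columns of the parity-check matrix, which is the same operation), then verify via $a>d-1$ that the shortened dimension still gives $\lceil k/r\rceil=m+1$ so the Singleton-type bound is met with equality. Your added verifications of locality (via the punctured block-indicator dual codeword) and of the two-sided Singleton comparison are more explicit than the paper's one-line treatment but are not a different argument.
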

\begin{proof}
Let $N$ be the smallest integer that $N\geq n$ and $r+1|N$, i.e., $N=n+r+1-a$. We construct the parity-check matrix $H$ by
running Algorithm 1. By Theorem~\ref{thm:construction}, $H$ is an $(\frac{N}{r+1}+d-2)\times N$ matrix and any $d-1$ columns of $H$ are linearly independent. We remove the last $N-n=r+1-a$ columns from $H$ and denote the resulting matrix by $H_1$.
Let $C$ be the linear code derived from parity-check matrix $H_1$. Then, $C$ has length $n$, minimal distance $d(C)\geq d$,
locality $r$ and dimension $k(C)\geq n-\frac{N}{r+1}-(d-2) = \frac{r N}{r+1} - r + (a-d+1)$, so that
$$\frac{k(C)}{r}\geq\frac{N}{r+1}-1+\frac{a-(d-1)}{r}.$$
As $a-(d-1)>0$, this gives us $\left\lceil\frac{k(C)}{r}\right\rceil\geq \frac{N}{r+1}$ and thus $k(C)+\left\lceil\frac{k(C)}{r}\right\rceil
\geq n-(d-2)$.
It follows that
$$
d(C)\geq d\geq n-k(C)-\left\lceil\frac{k(C)}{r}\right\rceil+2
$$
Thus, $C$ is an optimal locally repairable code.
\end{proof}

Under some reasonable assumption, the LRCs that are slightly away from the Singleton-type bound might be optimal.
\begin{lemma}{\cite[Theorem III.3]{KBTY18}}
\label{lm:opt}
Assume that  $C$ has $\lceil \frac{n}{r+1} \rceil$ disjoint recovery sets, a linear code $C$ with length $n=a \bmod r+1$, $a\neq 0,1$ and dimension either $k \bmod r \geq a$ or $r|k$ must obey that $d\leq n-k-\lceil\frac{k}{r}\rceil+1$.
\end{lemma}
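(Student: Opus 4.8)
The plan is to run the proof of the Singleton‑type bound \eqref{eq:1} while extracting one extra unit of slack from the fact that $(r+1)\nmid n$. Recall that for a linear code of dimension $k$,
\[ d \;=\; n-\max\bigl\{\,|F|\;:\;F\subseteq[n],\ \dim(C|_F)\le k-1\,\bigr\}, \]
because a subset $F$ on which $C$ loses rank is exactly the complement of the support of some nonzero codeword (after enlarging $F$ to a flat of the matroid of the columns of a generator matrix). So it suffices to exhibit $F\subseteq[n]$ with $\dim(C|_F)\le k-1$ and $|F|\ge k-1+\lceil k/r\rceil$; this gives at once $d\le n-|F|\le n-k-\lceil k/r\rceil+1$.

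First I would pin down the structure. By hypothesis $[n]$ is partitioned into $m:=\lceil n/(r+1)\rceil$ disjoint recovery sets $R_1,\dots,R_m$; set $\rho_i=|R_i|\le r+1$, so $\sum_i\rho_i=n$, and since $n=(m-1)(r+1)+a$ with $2\le a\le r$ the total deficiency is $\sum_i(r+1-\rho_i)=r+1-a\ge 1$. By Lemma~\ref{dual_distance} each $R_i$ carries a codeword of $C^\perp$ supported inside it; the dual codewords attached to any subfamily of $t$ of the $R_i$'s have pairwise disjoint supports, hence are linearly independent, so for $U$ their union $\dim(C|_U)\le|U|-t$.

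The crux is the choice of subfamily. I would take the $s:=\lceil k/r\rceil$ recovery sets of smallest size and let $U$ be their union. Then
\[ \dim(C|_U)\;\le\;\Bigl(\textstyle\sum_{\text{chosen}}\rho_i\Bigr)-s\;=\;s(r+1)-D-s\;=\;sr-D, \]
where $D$ is the total deficiency of the chosen sets; since we took the smallest ones, $D\ge\min(r+1-a,\,s)$. I would then verify $sr-\min(r+1-a,s)\le k-1$ by cases: when $r\mid k$ this reduces to $k\ge r$ and to $r+1-a\ge 1$, both clear; when $r\nmid k$, writing $k=qr+b$ with $1\le b\le r-1$ so $s=q+1$, it reduces to $b\ge a$ in the range $s>r+1-a$ (this is exactly the hypothesis) and to $q\ge r-b$ in the complementary, necessarily small‑$k$ range $s\le r+1-a$. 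This bookkeeping — making sure the selected recovery sets accumulate enough deficiency, and a separate direct check of the small‑$k$ corner — is the one genuinely delicate point, and the only place the conditions on $a$ and $k\bmod r$ enter.

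Granting $\dim(C|_U)\le k-1$, the rest is routine. From $\dim(C|_U)\le|U|-s$ we get $|U|\ge\dim(C|_U)+s$, and I would extend $U$ to a flat $F$ of rank exactly $k-1$ by adding rank‑increasing coordinates until the rank is $k-1$ and then taking the closure; this only adds coordinates, and costs at least one coordinate per unit of rank, so $|F|\ge|U|+(k-1-\dim(C|_U))\ge k-1+s=k-1+\lceil k/r\rceil$. Since $\dim(C|_F)=k-1<k$, a nonzero codeword of $C$ vanishes on $F$ and has weight at most $n-|F|$, whence $d\le n-k-\lceil k/r\rceil+1$.
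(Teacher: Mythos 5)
The paper does not reprove this lemma; it is cited verbatim from \cite{KBTY18}, so there is no internal proof to compare against. Your framework is the natural one: exhibit $F$ with $\dim(C|_F)\le k-1$ and $|F|\ge k-1+\lceil k/r\rceil$ by taking $s=\lceil k/r\rceil$ of the disjoint repair groups, using the disjointly-supported dual codewords to get slack $|U|-\dim(C|_U)\ge s$, and then completing $U$ to a rank-$(k-1)$ flat. The slack accounting and the flat-extension step are both correct.

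The gap is exactly where you wave at it. In the ``small-$k$ corner'' ($s\le r-a$, $k=(s-1)r+b$ with $a\le b\le r-1$), you need the chosen groups to carry deficiency $D\ge r-b+1$, but your pigeonhole bound gives only $D\ge\min(r+1-a,s)=s$, and the needed $s\ge r-b+1$ is simply not implied by $b\ge a$: take $r=10$, $a=b=2$, $k=2$, so $s=1<9=r-b+1$. This is not a routine bookkeeping check but a real obstruction, because with the lemma read literally (disjoint recovery sets of arbitrary sizes $\le r+1$) the conclusion can fail: a $[7,2,6]$ MDS code over $\F_7$, regarded with $r=4$, admits disjoint recovery sets $\{1,2,3\}$ and $\{4,5,6,7\}$, has $a=2$ and $k\bmod r=2\ge a$, yet $d=6>5=n-k-\lceil k/r\rceil+1$. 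What \cite{KBTY18} actually assumes --- and what the paper's constructions in Section 4 produce --- is that the $\lceil n/(r+1)\rceil$ disjoint groups all have size exactly $r+1$ except a single group of size $a$. Under that structural hypothesis the one small group is always among your $s$ smallest once $s\ge1$, giving $D\ge r+1-a$ outright, and then $sr-D\le k-1$ reduces in all cases to $b\ge a$ (or, when $r\mid k$, to $r+1-a\ge1$), with no small-$k$ exception. Your proof needs to import that concentration of deficiency in one group; the generic $\min(r+1-a,s)$ estimate cannot be repaired.
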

With the help of this lemma, we can extend the optimality of the shortened LRCs in Corollary~\ref{cor:shorten} to cover almost all the parameters for $d\leq r+2$.
\begin{cor}\label{cor:opt}
Assume $n \equiv a \pmod {r+1}$ and $a\neq 1$. There exists optimal LRC of length $n=\Omega_{d,r}(q^{1+\frac{1}{\lfloor (d-3)/2\rfloor}})$.
\end{cor}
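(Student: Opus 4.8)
The plan is to reduce Corollary~\ref{cor:opt} to the already-established Corollary~\ref{cor:shorten} by handling the ``missing'' residue classes, namely $2 \le a \le d-1$, using Lemma~\ref{lm:opt}. Recall that Corollary~\ref{cor:shorten} already gives an optimal LRC of the desired super-linear length whenever $n \equiv a \pmod{r+1}$ with $a > d-1$; it also trivially covers $a=0$ (that is the case $(r+1)\mid n$ of Theorem~\ref{thm:construction}). So the only residues left to treat are $a \in \{2,3,\dots,d-1\}$, and for these the construction of Algorithm~1 (run to produce an $(\frac{N}{r+1}+d-2)\times N$ parity-check matrix for $N$ the least multiple of $r+1$ at least $n$, then deleting $N-n$ columns) still produces a code $C$ of length $n$, locality $r$, distance $d(C)\ge d$, and dimension $k(C) \ge n - \frac{N}{r+1} - (d-2)$. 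What fails is only the \emph{exact} matching of the Singleton-type bound: with $a \le d-1$ the slack in the chain ``$k(C)/r \ge \frac{N}{r+1} - 1 + \frac{a-(d-1)}{r}$'' no longer forces $\lceil k(C)/r\rceil \ge \frac{N}{r+1}$, so we only get $d(C) \ge n - k(C) - \lceil k(C)/r \rceil + 1$ rather than $+2$.

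The key point is then that for these residues the ``$+1$'' trade-off is itself optimal, so $C$ is optimal in the sense used throughout Section~4 (cf. the $d=r+2$ discussion). Concretely, I would first observe that the construction can be arranged so that $C$ has $\lceil n/(r+1)\rceil$ disjoint recovery sets (the block structure $B_1,\dots,B_{N/(r+1)}$ of $H$ restricts to disjoint recovery sets on the surviving columns, with the last block possibly truncated — its defining codeword of $C^\perp$ still has support of size at most $r+1$ inside that block). Next I would check that the dimension condition in Lemma~\ref{lm:opt} is met: we need either $r \mid k(C)$ or $k(C) \bmod r \ge a$. Since $k(C) \ge \frac{rN}{r+1} - r + (a-(d-1)) = r\cdot\frac{N-r-1}{r+1} + (a - (d-1))$ and $\frac{N-r-1}{r+1} = \frac{N}{r+1}-1$ is an integer, we have $k(C) \equiv a - (d-1) \pmod r$ up to the exact value of $k(C)$; by adjusting $k(C)$ (i.e. choosing $H$ to have full rank $\frac{N}{r+1}+d-2$, which Algorithm~1 guarantees since the columns are built to be ``generic'') one pins down $k(C) = n - \frac{N}{r+1} - (d-2)$ exactly, and then a short computation modulo $r$ shows $k(C) \bmod r \ge a$ in the relevant range (using $d \le r+2$, so $d-1 \le r+1$ and the residue does not wrap around badly). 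With the hypotheses of Lemma~\ref{lm:opt} verified and $a \ne 1$ by assumption, the lemma yields $d \le n - k(C) - \lceil k(C)/r\rceil + 1$, which together with our lower bound $d(C) \ge d \ge n - k(C) - \lceil k(C)/r\rceil + 1$ shows $C$ attains the best possible trade-off, i.e. $C$ is an optimal LRC.

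Finally I would assemble the cases: for $a = 0$ use Theorem~\ref{thm:construction}, for $a > d-1$ use Corollary~\ref{cor:shorten}, and for $2 \le a \le d-1$ use the argument just sketched; in every case the length is $n = \Omega_{d,r}(q^{1+1/\lfloor(d-3)/2\rfloor})$ since it is inherited unchanged from the construction of Theorem~\ref{thm:construction} (deleting a bounded number $N-n \le r$ of columns does not affect the asymptotics). I expect the main obstacle to be the bookkeeping that verifies the precise dimension $k(C)$ and its residue modulo $r$ so that Lemma~\ref{lm:opt}'s hypothesis ``$k \bmod r \ge a$ or $r \mid k$'' actually holds for all $a$ in $\{2,\dots,d-1\}$ — this requires being careful that Algorithm~1 produces a parity-check matrix of exactly full row rank (so that $k(C)$ is exactly $n - \frac{N}{r+1} - (d-2)$ and not larger), and that the disjoint-recovery-set hypothesis of Lemma~\ref{lm:opt} survives the column deletion. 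The distance and locality claims, and the asymptotic length, are immediate from the earlier results and need no new work.
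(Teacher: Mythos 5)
Your proposal matches the paper's proof: both reduce to Corollary~\ref{cor:shorten} for $a > d-1$, use the same shortened-parity-check construction for $2 \le a \le d-1$, and then invoke Lemma~\ref{lm:opt} after verifying the dimension hypothesis $k(C) \bmod r \ge a$ (or $r \mid k(C)$). The paper carries out exactly the residue computation you sketch, splitting into the cases $a < d-1$ (giving $k(C) \equiv a-d+1+r \pmod r$) and $a = d-1$ (giving $r \mid k(C)$).
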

\begin{proof}
It suffices to consider the case that $a\leq d-1$ as Corollary \ref{cor:shorten} already covers the rest of the case.
The construction of the optimal LRCs is the same as that in Corollary \ref{cor:shorten}. It is easy to check that
such code $C$ satisfies
\begin{equation}\label{eq:subopt}
d(C)\geq n-k(C)-\lceil\frac{k(C)}{r}\rceil+1,
\end{equation}
no matter what value $a$ takes.
To see the optimality of this code, it suffices to check whether $k(C)$ meets the condition of Lemma \ref{lm:opt}.
We divide it into two cases. We first consider $a<d-1$.
Observe that the equality in \eqref{eq:subopt} indicates $k(C)= n-\frac{N}{r+1}-(d-2)=\frac{rN}{r+1}-(N-n+d-2)$. This implies
$$k(C) \equiv -(r+1-a+d-2) \equiv a-d+1+r \pmod {r } $$
due to $N=n+r+1-a$ and $a<d-1$. The desired result follows from $k(C) \pmod r \equiv a-d+1+r >a$.
We turn to the second case $a=d-1$. In this case, the similar argument leads to $k(C)\equiv -(r+1-a+d-2)\equiv -(1-a+d-2)\equiv 0 \pmod r$. The desired result follows from the fact that $r| k(C)$.
\end{proof}

\end{document}